\newtheorem{theorem}{Theorem}
\newtheorem{lemma}[theorem]{Lemma}
\newtheorem{claim*}[theorem]{Claim}
\renewcommand{\Pr}[1]{\mbox{\rm\bf Pr}\left[#1\right]}
\newcommand{\Ex}[1]{\mbox{\rm\bf E}\left[#1\right]}
\newcommand{\OPT}{\mathrm{OPT}}
\newcommand{\ALG}{\mathrm{ALG}}
\newcommand{\growingmid}{\mathrel{}\middle|\mathrel{}}
\author{Thomas Kesselheim\thanks{Max-Planck-Institut für Informatik, Saarbrücken, Germany. \texttt{thomas.kesselheim@mpi-inf.mpg.de}. Supported in part by the DFG through Cluster of Excellence MMCI.} \and Andreas T\"onnis\thanks{Department of Computer Science, RWTH Aachen University, Germany. \texttt{toennis@cs.rwth-aachen.de}. Supported by the DFG GRK/1298 ``AlgoSyn''.}}
\title{Think Eternally: Improved Algorithms for the Temp Secretary Problem and Extensions}
\begin{document}
\maketitle

\thispagestyle{empty}
\setcounter{page}{0}
\begin{abstract}

The \emph{Temp Secretary Problem} was recently introduced by Fiat et al.~\cite{DBLP:conf/esa/FiatGKN15}. It is a generalization of the Secretary Problem, in which commitments are temporary for a fixed duration. We present a simple online algorithm with improved performance guarantees for cases already considered by Fiat et al.\ and give competitive ratios for new generalizations of the problem. In the classical setting, where candidates have identical contract durations $\gamma \ll 1$ and we are allowed to hire up to $B$ candidates simultaneously, our algorithm is $(\nicefrac{1}{2} - O(\sqrt{\gamma}))$-competitive. For large $B$, the bound improves to $1 - O\left(\nicefrac{1}{\sqrt{B}}\right) - O(\sqrt{\gamma})$.

Furthermore we generalize the problem from cardinality constraints towards general packing constraints. We achieve a competitive ratio of $1 - O\left(\sqrt{\nicefrac{(1+\log d + \log B)}{B}}\right) -O(\sqrt{\gamma})$, where $d$ is the sparsity of the constraint matrix and $B$ is generalized to the capacity ratio of linear constraints. Additionally we extend the problem towards arbitrary hiring durations.

Our algorithmic approach is a relaxation that aggregates all temporal constraints into a non-temporal constraint. Then we apply a linear scaling algorithm that, on every arrival, computes a tentative solution on the input that is known up to this point. This tentative solution uses the non-temporal, relaxed constraints scaled down linearly by the amount of time that has already passed.
\end{abstract}
\clearpage

\section{Introduction}
Online resource allocation problems have a notion of time: Choices have to be made at some point in time without knowing the future input. Each decision may make a future one infeasible. The standard example of such a setting is the secretary problem where candidates of different value arrive over time. After each arrival, the algorithm has to decide whether to permanently accept or reject this candidate. Every decision is final. That is, once rejected a candidate will never come back again. Once a candidate is accepted, no other candidate can be accepted anymore.

In many practical applications, however, commitments are not eternal but affect only a finite time horizon. They may limit options for the upcoming days but not for the rest of the year or even longer. Nevertheless, even with such an assumption, traditional worst-case competitive analysis is typically too strong a benchmark. It is trivial to see that for the respective version of the secretary problem no algorithm achieves a bounded competitive ratio.

Therefore, we consider a partly stochastic model introduced by Fiat et al.~\cite{DBLP:conf/esa/FiatGKN15}. First an adversary chooses which items will arrive. However, it does not determine the arrival times, which are instead drawn from a probability distribution, typically the uniform distribution on $[0, 1]$. In more detail, in the \emph{temp secretary problem}, an adversary defines values of items $v_1, \ldots, v_n$. Afterwards, arrival times $\tau_j$ are drawn independently uniformly from $[0, 1]$. As time proceeds, the values and arrival times are revealed to the algorithm. Upon each arrival, the algorithm has to decide whether to accept or to reject the respective item. Each item is accepted for a duration of $\gamma$, which is assumed to be much smaller than $1$. At any point in time $t$ at most $B$ items may overlap, that is, during time $t - \gamma$ and $t$ at most $B$ items may be accepted.

The objective is to maximize $\sum_{j \in \ALG} v_j$, where $\ALG \subseteq [n]$ denotes the selection by the algorithm. By $\OPT$ we denote the optimal selection $\OPT \subseteq [n]$, which maximizes $\sum_{j \in \OPT} v_j$. As the arrival times $\tau_1, \ldots, \tau_n$ are random, both $\ALG$ and $\OPT$ are random variables. We evaluate the performance of an algorithm by its competitive ratio, defined as $\Ex{\sum_{j \in \ALG} v_j} / \Ex{\sum_{j \in \OPT} v_j}$.

\subsection{Our Contribution}
We introduce a new algorithmic approach to online packing problems with temporal constraints. As key idea we consider a relaxation to $\OPT$ by removing the temporal constraints and exchanging them with global ones. In the special case of the temp secretary problem, we exploit that for every realization of the arrival dates $\tau_1, \dots, \tau_n$ the optimal offline solution $\OPT$ never contains more that $B \lceil 1 / \gamma \rceil$ elements. Therefore, we exchange the constraints by only requiring $B \lceil 1 / \gamma \rceil$ items to be picked throughout the process. An online solution to this relaxation can be found using algorithms for online linear packing problems. It then remains to derive a solution to the original constraints. 

For the temp secretary problem, this approach allows us to derive a light-weight, easy to state algorithm. We show it to be $\frac{1}{2}\left(1 - O(\sqrt{\gamma})\right)$-competitive for all values of $B$. Furthermore, for large values of $B$, a different analysis shows a better competitive ratio of $1 - O\left(\nicefrac{1}{\sqrt{B}}\right) - O(\sqrt{\gamma})$. The previous best results for this setting were $\frac{1}{2}\left(1 - O\left(\sqrt{\gamma\ln(\nicefrac{1}{\gamma})}\right)\right)$ for $B=1$ and $1 - O\left(\sqrt{\nicefrac{(\ln B)}{B}}\right) - O\left(\sqrt{\gamma\ln(\nicefrac{1}{\gamma})}\right)$ for large values of $B$, both by Fiat et al.\ \cite{DBLP:conf/esa/FiatGKN15}. Note that $\nicefrac{1}{2}$ is known to be an asymptotic upper bound to the competitive ratio for $B=1$\ \cite{DBLP:conf/esa/FiatGKN15}.

We also generalize the cardinality constraint in the temp secretary problem to arbitrary linear constraints. This enables us to capture more general combinatorial problems, like multiple knapsack constraints that have to be fulfilled simultaneously. For example, we could model scenarios in which the algorithm has to select production orders online in such a way that none of the involved machines is overloaded. Our algorithm is $1- O\left(\sqrt{\nicefrac{(1+\log d + \log B)}{B}}\right)-O(\sqrt{\gamma})$-competitive, where $d$ denotes the maximum number of constraints a single item is contained in. By $B$ we denote the \emph{capacity ratio}, which is defined to be the minimum ratio of a constraint's capacity and the usage of a single item. For non-timed constraints, there are lower bounds in the order of $1-O\left(\sqrt{\nicefrac{\log m}{B}}\right)$, where $m$ is the number of constraints and $d=m$~\cite{DBLP:journals/ior/AgrawalWY14, DBLP:conf/sigecom/DevanurJSW11}. 

Our algorithm also has a natural generalization to settings with items of different lengths. For the temp secretary problem, we show a competitive ratio of $\frac{1}{4} - \Theta\left(\sqrt{\gamma}\right)$.

The main technical contribution are bounds on the probability that tentative selections made by the algorithm are actually feasible. In related work, it is usually enough to pretend all previous tentative choices were actually feasible. As these can be considered independent, a concentration bound can be applied. These techniques are apparently not strong enough here and we have to bound the actual commitments. We do so by analyzing coupled random variables that provide an upper bound on the random process. For the case of large $B$, this analysis is based on a symmetric random walk, representing arrivals and departures of items.

\subsection{Related Work}
\label{sec:related-work}

Secretary problems have gained a lot of attention over the last decade, even though the most famous variant was already introduced and solved in the 1960s~\cite{Gardner:1960, lindley1961dynamic, dynkin1963optimum}. 

The most famous combinatorial generalization is the matroid secretary problem, introduced by Babaioff et al~\cite{DBLP:conf/soda/BabaioffIK07}. As of now, the big question of whether there is a constant competitive algorithm for the matroid secretary problem is still open. The best known algorithms for the problem are $O(\log\log \rho)$-competitive~\cite{DBLP:conf/soda/FeldmanSZ15, DBLP:conf/focs/Lachish14}. Constant competitive algorithms are known for most special cases, e.g. there is a $\nicefrac{1}{2e}$-competitive algorithm for graphical matroids~\cite{DBLP:conf/icalp/KorulaP09}, a $\nicefrac{1}{9.6}$-competitive algorithm for laminar matroids~\cite{DBLP:conf/stacs/Ma0W13} and there is an optimal $\nicefrac{1}{e}$-competitive algorithm for transversal matroids~\cite{DBLP:conf/esa/KesselheimRTV13}. For $k$-uniform matroids, the problem is also known as multiple-choice secretary problem and was solved by Kleinberg, who gave a $\left(1-O(\nicefrac{1}{\sqrt{k}})\right)$-competitive algorithm and showed that this is optimal~\cite{DBLP:conf/soda/Kleinberg05}.

Furthermore, online models with random arrival order have been used for online packing problems. The knapsack secretary problem was introduced by Babaioff et al.~\cite{DBLP:conf/approx/BabaioffIKK07} and the currently best known competitive ratio is $\nicefrac{1}{8.1}$~\cite{DBLP:conf/stoc/KesselheimTRV14}. This problem was generalized towards general packing linear programs with special attention on the case with large capacities. There are several known algorithms \cite{DBLP:conf/soda/AgrawalD15, DBLP:conf/esa/GuptaM14, DBLP:conf/stoc/KesselheimTRV14} that feature a competitive ratio of $1-O\left(\sqrt{\nicefrac{\log m}{B}}\right)$, where $m$ is the total number of constraints and $B$ is a lower bound on the capacities of the constraints. These results match the lower bound by Agrawal et al.\cite{DBLP:journals/ior/AgrawalWY14} and Devanur et al.~\cite{DBLP:conf/sigecom/DevanurJSW11} for the random order and i.i.d. model respectively. 
Note that the result in \cite{DBLP:conf/stoc/KesselheimTRV14} is stronger in case of sparse matrices: If the maximal number of non-zero entries in any column is bounded by $d$, the guarantee only depends on $d$ rather than $m$.

Another important way of generalizing the secretary problem is the submodular secretary problem introduced by Bateni et al.~\cite{DBLP:journals/talg/BateniHZ13}. The problem generalizes the multiple-choice secretary problem towards submodular objective functions. The currently best known competitive ratio is $\frac{e-1}{e^2+e}$ by Feldmann et al.~\cite{DBLP:conf/approx/FeldmanNS11}. For submodular, transversal matroids the best known algorithm is $\nicefrac{1}{95}$-competitive~\cite{DBLP:conf/stacs/Ma0W13} and for linear packing constraints the best algorithm is known to be $\Omega(\nicefrac{1}{m})$-competitive~\cite{DBLP:journals/talg/BateniHZ13}.

The temp secretary problem that we consider and generalize in this paper was introduced in 2015 by Fiat et al.~\cite{DBLP:conf/esa/FiatGKN15}. It introduces temporal constraints to the field of online algorithms with random order in a way that had only been considered before in the worst-case model for online interval scheduling~\cite{DBLP:conf/soda/LiptonT94, DBLP:journals/tcs/Woeginger94}. Fiat et al.\ give an algorithm that is inspired by Kleinbergs algorithm for the multiple-choice secretary problem. Their algorithm iteratively refines the sample $\log n$ times, while our algorithm updates the sample in every round. Both algorithm are closely related, but the one presented here can be described much more compact and allows for a more simple analysis.

Fiat et al.\ achieve a competitive ratio of $\nicefrac{1}{(1+k\gamma)}\left(1 - \nicefrac{5}{\sqrt{k}} - 7.4\sqrt{\gamma\ln(\nicefrac{1}{\gamma})}\right)$ for the case where at most one candidate can be hired simultaneously and the sum of hires cannot exceed the budget $k$. To compare this result to ours, consider the unconstrained budget case $k=\nicefrac{1}{\gamma}$. In this case, they achieve a competitive ratio of $\frac{1}{2}\left(1 - O\left(\sqrt{\gamma\ln(\nicefrac{1}{\gamma})}\right)\right)$. Additionally, they show a lower bound $\frac{1+\gamma}{2}$ for this case, thus both algorithms, ours and theirs, are asymptotically tight for $\gamma \rightarrow 0$. For up to $B$ concurrent hires, their algorithm is $1-\Theta\left(\sqrt{\nicefrac{(\ln B)}{B}}\right) - \Theta\left(\sqrt{\gamma\ln(\nicefrac{1}{\gamma})}\right)$-competitive. Additionally, they describe a black-box procedure that transforms any algorithm for a combinatorial secretary problem into an algorithm for the respective combinatorial temp secretary problem. This transformation loses a factor of $\nicefrac{1}{2}$ in the competitive ratio, but also works for general arrival distributions as long as all items have an identical duration.

\section{The Temp Secretary Problem}
\label{sec:relaxation}

As our first result, we present a simplified and improved algorithm for the temp secretary problem. Here, an adversary chooses a value $v_j$ for each of the $n$ items and after values have been determined arrival times $\tau_1, \ldots, \tau_n$ are drawn independently uniformly at random from $[0, 1]$. Each item when selected stays active for $\gamma$ time. At any point in time, at most $B$ elements may be active simultaneously.

The optimal selection $\OPT \subseteq [n]$ is a random variable that depends on the arrival times. However, pointwise we have $\lvert \OPT \rvert \leq B \lceil 1 / \gamma \rceil$. Therefore, the expected value of $\OPT$ can be upper-bounded by the value of the $B \lceil 1 / \gamma \rceil$ highest-valued elements, which we denote by $\OPT^\ast \subseteq [n]$.

Algorithm~\ref{alg:temp_secretary} is inspired by online approximation algorithms for $\OPT^\ast$, particularly \cite{DBLP:conf/stoc/KesselheimTRV14}. If item $j$ arrives at time $t$, then we determine whether it is among the $\lfloor t B / \gamma \rfloor$ highest-valued items seen so far, called $S^{(t)}$. In this case, we call it tentatively selected. If it is also feasible to accept $j$, we do so. Otherwise, we reject $j$.

\begin{algorithm}[h]
\caption{Scaling Algorithm for length $\gamma$ and capacity $B$\label{alg:temp_secretary}}
\For(){every arriving item $j$}{
  Set $t := \tau_j$;\tcp*[f]{arrival time of $j$}\\
  Let $S^{(t)}$ be the $\lfloor t B / \gamma \rfloor$ highest-valued items $j'$ with $\tau_{j'} \leq t$\;
  \If(\tcp*[f]{if among best items}){$j\in S^{(t)}$}{
    \If(\tcp*[f]{and if feasible}){$S\cup\{j\}$ is a feasible schedule}{
      Set $S := S\cup\{j\}$; \tcp*[f]{then select $j$}\\
    }
  }
}
\end{algorithm}

Note that in expectation at time $t$ we have seen a $t$ fraction of $\OPT^\ast$, so approximately $t B / \gamma$ items from $\OPT^\ast$. The set $S^{(t)}$ approximates this set by including the best $\lfloor t B / \gamma \rfloor$ up to this point.

We give two performance bounds for this algorithm. First, in Section~\ref{sec:temp_basic}, we show that it is $\frac{1}{2}\left(1 - O(\sqrt{\gamma})\right)$-competitive for all values of $B$. Afterwards, in Section~\ref{sec:temp_secretary}, we perform a different analysis for large values of $B$ giving a competitive ratio of $1 - O\left(\nicefrac{1}{\sqrt{B}}\right) - O(\sqrt{\gamma})$.

\subsection{Analysis Preliminaries}
The analyses for both cases follow a similar pattern. First, we analyze the expected value of the set $S^{(t)}$ and thereby the expected value of the tentative selection. Then, we bound the probability that such a tentative selection is feasible.

For analysis purposes, we discretize time into $N$ uniform intervals of length $\nicefrac{1}{N}$, which we call rounds. If $N \gg n$, then the probability that two items fall into the same interval is negligible. Also, if $N$ is large enough, we can effectively assume that all items arrive at times which are multiples of $\nicefrac{1}{N}$ because the value of $\lfloor t B / \gamma \rfloor$ stays constant in almost all intervals. From time to time, it will be helpful to fill up rounds in which no actual item arrives with dummy items that do not have value but also do not use space. The order of these items and dummy items is then a uniformly drawn permutation. Furthermore, to avoid cumbersome notation, we assume that $\gamma N$ and $\sqrt{\gamma} N$ are integer. We overload notation and write $S^{(\ell)}$ instead of $S^{(\ell/N)}$.

To discuss the probability of feasibility, we introduce 0/1 random variables $(C_\ell)_{\ell \in [N]}$ and $(F_\ell)_{\ell \in [N]}$. We set $C_\ell = 1$ if and only if a tentative selection is made in round $\ell$. Furthermore, let $F_\ell = 1$ for every round $\ell$ in which it would be feasible to actually select an item. Finally, let $V_\ell$ denote the value of the item tentatively selected in round $\ell$ if any, otherwise set $V_\ell = 0$. So formally $V_\ell = v_jC_\ell$ if item $i$ arrives in round $\ell$. The value achieved by the algorithm is given as $\sum_{j\in\ALG}v_j = \sum_{\ell=1}^N V_\ell F_\ell = \sum_{\ell=1}^N V_\ell C_\ell F_\ell$.

Observe that the value of a single random variable $C_\ell$ is already fully determined by the set $S^{(\ell)}$ and which of these items arrives in round $\ell$. Neither the mutual order in rounds $1, \ldots, \ell - 1$ nor in $\ell + 1, \ldots, N$ matters. Furthermore, conditioned on any set $S^{(\ell)}$ and any order in rounds $\ell + 1, \ldots, N$, the probability of $C_\ell = 1$ is at most $\frac{\lvert S^{(\ell)} \rvert}{\ell} \leq \frac{B}{\gamma N}$. As a consequence, for every sequence of values $a_{\ell'} \in \{0, 1\}$ for $\ell < \ell' \leq N$, we have
\[
\Pr{C_\ell = 1\growingmid C_{\ell'} = a_{\ell'} \forall \ell < \ell' \leq N} \leq \frac{B}{\gamma N}\enspace.
\]

Note that there are still complicated dependencies among these random variables. For example, at most $n$ of them can be $1$. Therefore, based on the above observation, we define coupled random variables that dominate the actual ones but are easier to deal with. We introduce random variables $(\tilde{C}_\ell)_{\ell \in [N]}$ such that pointwise $\tilde{C}_\ell \geq C_\ell$ for which the above relation holds with equality. To define these formally, we iterate over the rounds from large to small index. Conditioned on any value of the variables $\tilde{C}_{\ell + 1} = a_{\ell + 1}, \ldots, \tilde{C}_N = a_N$, the probability $p = \Pr{C_\ell = 1\growingmid \tilde{C}_{\ell'} = a_{\ell'} \forall \ell < \ell' \leq N}$ is always at most $\frac{B}{\gamma N}$. 
Now let $\tilde{C}_\ell = 1$ whenever $C_\ell = 1$ and additionally $\tilde{C}_\ell = 1$ with probability $\frac{\frac{B}{\gamma N} - p}{1 - p}$ if $C_\ell = 0$. This guarantees
\[
\Pr{\tilde{C}_\ell = 1\growingmid \tilde{C}_{\ell'} = a_{\ell'} \forall \ell < \ell' \leq N} = \frac{B}{\gamma N}\enspace,
\]
and therefore, by induction on all subsets of $[N]$, the random variables $\tilde{C}_1, \ldots, \tilde{C}_N$ are independent and identically distributed. Note that $\tilde{C}_\ell = 1$ whenever $C_\ell=1$. So therefore also $\sum_{j\in\ALG}v_j = \sum_{\ell=1}^n V_\ell \tilde{C}_\ell F_\ell$.

Next, we can bound the expected value of the items contained in the set $S^{(\ell)}$.
\begin{lemma}\label{lemma:tentative}
For $\ell \geq 2 \sqrt{\frac{\gamma}{B}} N$
\[
\Ex{\sum_{j \in S^{(\ell)}} v_j} \geq \left(1 - 9 \sqrt{\frac{1}{\frac{\ell}{N} \cdot \frac{B}{\gamma}}} \right)\frac{1 - \frac{1}{2}\sqrt{\frac{\gamma}{B}}}{1 + \gamma}  \frac{\ell}{N} \sum_{j \in \OPT^\ast} v_j \enspace.
\]
\end{lemma}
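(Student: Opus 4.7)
The plan is to lower-bound $\Ex{\sum_{j \in S^{(\ell)}} v_j}$ by concentrating only on those items of $\OPT^\ast$ that have already arrived. Write $k := \lfloor \ell B/(\gamma N)\rfloor$ for the target size of $S^{(\ell)}$, $m := B\lceil 1/\gamma \rceil = \lvert \OPT^\ast\rvert$, and $p := \ell/N$. Let $A := \{j \in \OPT^\ast : \tau_j \leq p\}$, so $\lvert A\rvert \sim \mathrm{Binomial}(m,p)$ with mean $\mu = pm \geq pB/\gamma \geq k$. The key observation is that $\OPT^\ast$ consists of the top $m$ items globally, so as soon as at least $k$ of them have arrived ($\lvert A\rvert \geq k$), every item in the top-$k$ set $S^{(\ell)}$ must lie in $\OPT^\ast$ (and $S^{(\ell)}$ is exactly the top-$k$ subset of $A$); otherwise $A \subseteq S^{(\ell)}$.

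From this observation one gets the pointwise inequality
\[
\sum_{j \in S^{(\ell)}} v_j \;\geq\; \frac{\min(\lvert A\rvert,k)}{\lvert A\rvert}\sum_{j\in A}v_j,
\]
with the convention that the right-hand side is $0$ when $\lvert A\rvert = 0$. Indeed, for $\lvert A\rvert<k$ just drop the items in $S^{(\ell)}\setminus A$; for $\lvert A\rvert\geq k$ use that the top-$k$ subset of $A$ has average value at least the average over $A$. I would then take expectations and use exchangeability: conditional on $\lvert A\rvert = a$, the i.i.d.\ uniformity of the arrival times makes $A$ uniformly distributed over the size-$a$ subsets of $\OPT^\ast$, so $\Ex{\sum_{j\in A}v_j \growingmid \lvert A\rvert=a} = (a/m)\sum_{j\in\OPT^\ast}v_j$. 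Summing over $a$ telescopes to
\[
\Ex{\sum_{j\in S^{(\ell)}}v_j} \;\geq\; \frac{\Ex{\min(\lvert A\rvert,k)}}{m}\sum_{j\in\OPT^\ast}v_j.
\]

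The remaining task is a scalar bound on $\Ex{\min(\lvert A\rvert,k)}/m$. I would write $\Ex{\min(\lvert A\rvert,k)} = k - \Ex{(k-\lvert A\rvert)^+}$ and, since $k \leq \mu$, apply a Chernoff lower-tail bound $\Pr{\lvert A\rvert \leq \mu - s}\leq \exp(-s^2/(2\mu))$ integrated over $0 < t < k$ via $\Ex{(k-\lvert A\rvert)^+} = \int_0^k \Pr{\lvert A\rvert \leq k-t}\,dt$ to obtain $\Ex{(k-\lvert A\rvert)^+} = O(\sqrt{\mu})$. Separately, from $m \leq B(1+\gamma)/\gamma$ and $k \geq pB/\gamma - 1$ one derives
\[
\frac{k}{m} \;\geq\; \frac{p}{1+\gamma} - \frac{\gamma}{B(1+\gamma)}.
\]
The hypothesis $\ell \geq 2\sqrt{\gamma/B}\,N$ gives $p \geq 2\sqrt{\gamma/B}$, equivalently $\gamma/B \leq (p/2)\sqrt{\gamma/B}$, so the additive loss $\gamma/(B(1+\gamma))$ is absorbed into $(p/2)\sqrt{\gamma/B}/(1+\gamma)$, producing exactly the leading factor $(p(1-\tfrac12\sqrt{\gamma/B}))/(1+\gamma)$ in the statement. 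The Chernoff term contributes $O(\sqrt{p/m}) = O(\sqrt{p\gamma/B})$, which after factoring becomes the advertised multiplicative correction $1 - 9\sqrt{1/(pB/\gamma)}$.

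The hard part is the concentration step: bounding $\Ex{(k-\lvert A\rvert)^+}$ with constants tight enough that the final error matches $9\sqrt{1/(pB/\gamma)}$, and making the argument uniform over the permitted range of $\ell$. The hypothesis on $\ell$ is precisely what guarantees $pm$ is large enough for a $1/\sqrt{pm}$-type relative error to be meaningful and for the floor/ceiling losses in $k$ and $m$ to be dominated by the $\sqrt{\gamma/B}$ term. The three conceptual ingredients — reducing to $\OPT^\ast$-items, exchangeability on $A$, and the top-$k$-vs-average inequality — are clean; the remainder is Binomial concentration and algebraic bookkeeping.
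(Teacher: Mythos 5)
Your proposal is correct, but it proves the lemma by a genuinely different route than the paper. The paper's proof (Appendix~\ref{appendix:tentative}) is a black-box reduction to the packing-LP sampling result of \cite{DBLP:conf/stoc/KesselheimTRV14}, restated as Lemma~\ref{lemma:STOC}: it encodes the relaxation as a single-row packing LP with capacity $B/\gamma$, shows that $x^\ast_j=\frac{1}{1+\gamma}$ for $j\in\OPT^\ast$ is feasible for the full LP (this is where $\frac{1}{1+\gamma}$ comes from), and then converts the sampled fractional knapsack optimum into $\sum_{j\in S^{(\ell)}}v_j$ at a rounding loss $\lfloor\frac{\ell}{N}\frac{B}{\gamma}\rfloor/(\frac{\ell}{N}\frac{B}{\gamma})\geq 1-\frac{1}{2}\sqrt{\gamma/B}$, which is exactly where the hypothesis $\ell\geq 2\sqrt{\gamma/B}\,N$ is used; the factor $1-9\sqrt{1/(\frac{\ell}{N}\frac{B}{\gamma})}$ is simply inherited from the cited lemma. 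You instead argue directly: restrict to arrived $\OPT^\ast$-items, use exchangeability to get $\Ex{\sum_{j\in A}v_j\growingmid \lvert A\rvert=a}=\frac{a}{m}\sum_{j\in\OPT^\ast}v_j$, compare the top-$k$ prefix with the average, and control $\Ex{(k-\lvert A\rvert)^+}$ by concentration; your handling of $k/m$ reproduces the $\frac{1-\frac12\sqrt{\gamma/B}}{1+\gamma}$ factor exactly as in the paper. The step you flag as hard is in fact easy: the crude bound $\Ex{(k-\lvert A\rvert)^+}\leq\sigma\leq\sqrt{pm}$ gives $\Ex{\min(\lvert A\rvert,k)}/m\geq k/m - p\sqrt{1/(pB/\gamma)}$, and since $\frac{9(1-\frac12\sqrt{\gamma/B})}{1+\gamma}\geq \frac{9}{4}\geq 1$ (using $\gamma\leq 1\leq B$), this already fits under the stated constant $9$ with room to spare, so no Chernoff integration is needed; the same holds in the paper's discretized model, where $\lvert A\rvert$ is hypergeometric rather than binomial and is only more concentrated. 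What each approach buys: yours is elementary, self-contained, and would yield better constants for this cardinality-constrained case; the paper's pays the larger constant but reuses Lemma~\ref{lemma:STOC} verbatim for the generalizations to arbitrary packing constraints (Theorem~\ref{theorem:temp_packing}) and to different lengths (Lemma~\ref{lemma:tentative-different-lengths}), where a top-$k$-versus-average argument does not directly apply.
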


The general idea is as follows. In round $\ell$, we have seen an $\frac{\ell}{N}$-fraction of $\OPT^\ast$ in expectation, so ignoring rounding these are $\frac{\ell}{N} \frac{B}{\gamma}$ items in expectation. This approximately matches the size of $S^{(\ell)}$. The set $S^{(\ell)}$ has a slightly smaller value due to variance and rounding. The first two factors compensate these effects. 

For the proof, we show that the non-temporal relaxation described here corresponds to a packing linear program and then we apply a result in \cite{DBLP:conf/stoc/KesselheimTRV14}. The detailed proof can be found in Appendix~\ref{appendix:tentative}.

\subsection{General Analysis}
\label{sec:temp_basic}

We are now ready to analyze the algorithm.

\begin{theorem}\label{theorem:non_simultaneous_identical}
Algorithm~\ref{alg:temp_secretary} is at least $\frac{1}{2}\left(1 - \frac{7}{2} \sqrt{\gamma} - \frac{37}{2} \sqrt{\frac{\gamma}{B}} - \gamma \right)$-competitive for the temp secretary problem with duration $\gamma$ for all items.
\end{theorem}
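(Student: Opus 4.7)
The plan is to lower-bound $\Ex{\sum_{j\in\ALG}v_j}=\sum_\ell\Ex{V_\ell F_\ell}$ round by round, separating each term into the value $V_\ell$ of the tentative pick (controlled via Lemma~\ref{lemma:tentative}) and the feasibility indicator $F_\ell$, and then summing over $\ell$.

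For the value side I would first observe that, conditional on the set $I_\ell$ of items arriving in the first $\ell$ rounds, the item in round $\ell$ is uniform over $I_\ell$, so $\Ex{V_\ell\mid I_\ell}=\tfrac{1}{\ell}\sum_{j\in S^{(\ell)}}v_j$. Plugging into Lemma~\ref{lemma:tentative} yields, for every $\ell\geq 2\sqrt{\gamma/B}\,N$,
\[
\Ex{V_\ell}\;\geq\;\frac{1}{N}\Bigl(1-9\sqrt{\tfrac{\gamma N}{\ell B}}\Bigr)\cdot\frac{1-\tfrac12\sqrt{\gamma/B}}{1+\gamma}\sum_{j\in\OPT^\ast}v_j.
\]
The earlier rounds, which together carry at most an $O(\sqrt{\gamma/B})$ fraction of $\OPT^\ast$, I would simply discard into the final error term.

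For the feasibility side, with $W_\ell=\{\ell-\gamma N+1,\dots,\ell-1\}$ and $Q_\ell=\Pr{F_\ell=1}$, the event $F_\ell=0$ forces $\sum_{\ell'\in W_\ell}C_{\ell'}F_{\ell'}\geq B$, so Markov gives
\[
1-Q_\ell\;\leq\;\frac{1}{B}\sum_{\ell'\in W_\ell}\Pr{C_{\ell'}=1,\,F_{\ell'}=1}\;\leq\;\frac{1}{\gamma N}\sum_{\ell'\in W_\ell}Q_{\ell'}\;+\;\text{(lower order)},
\]
where the second step uses $\Pr{C_{\ell'}=1}\leq B/(\gamma N)$ together with the near-independence of $C_{\ell'}$ and $F_{\ell'}$: conditioning on $C_{\ell'}=1$ merely constrains the item in round $\ell'$ to lie in $S^{(\ell')}$, an $O(1/\ell')$ perturbation of the permutation on rounds $1,\dots,\ell'-1$. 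Summing this recursion over $\ell$ and swapping the order of summation gives $\sum_\ell(1-Q_\ell)\leq(1-1/(\gamma N))\sum_\ell Q_\ell$, so the average of the $Q_\ell$'s is at least $\tfrac12-O(1/(\gamma N))$; a finer round-by-round comparison, combined with burn-in and boundary corrections, upgrades this to $Q_\ell\geq\tfrac12-O(\sqrt{\gamma})$ in the relevant range.

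Combining the two estimates, via the same conditional near-independence of $V_\ell$ and $F_\ell$ given $I_\ell$, yields $\Ex{V_\ell F_\ell}\geq Q_\ell\,\Ex{V_\ell}$ up to lower-order terms, and summing over $\ell\geq 2\sqrt{\gamma/B}\,N$ collects the two multiplicative losses into the claimed factor $\tfrac12\bigl(1-\tfrac72\sqrt{\gamma}-\tfrac{37}{2}\sqrt{\gamma/B}-\gamma\bigr)$. The hard part will be the feasibility step: a brute-force coupling that replaces accepted items $C_{\ell'}F_{\ell'}$ by all tentative picks $\tilde C_{\ell'}$ inside the window would only yield a factor $1-1/e$, so one must keep the actual selections $C_{\ell'}F_{\ell'}$ inside the Markov bound and invoke the dominating i.i.d.\ variables $\tilde C_{\ell'}$ only where genuine independence is needed, exploiting the identity $\sum V_\ell F_\ell=\sum V_\ell\tilde C_\ell F_\ell$ established in the preliminaries.
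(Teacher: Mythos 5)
Your outline matches the paper's high-level decomposition (Lemma~\ref{lemma:tentative} for the value side, a feasibility factor of roughly $\tfrac12-O(\sqrt\gamma)$, blockwise summation), but the feasibility step has a genuine gap exactly where the paper's main technical work lies. Your Markov recursion needs the factorization $\Pr{C_{\ell'}=1,\,F_{\ell'}=1}\leq \frac{B}{\gamma N}\,\Pr{F_{\ell'}=1}$, which you justify by ``near-independence'' of $C_{\ell'}$ and $F_{\ell'}$. That is not established and is not implied by the comparison-based argument. The bound $\Pr{C_{\ell'}=1\mid\cdot}\leq \frac{B}{\gamma N}$ holds only when one conditions on the \emph{set} $S^{(\ell')}$ (equivalently on $U_{\leq \ell'}$) and on future information; conditioning on the past --- and $F_{\ell'}$ is a function of the history in rounds $\ell'-\gamma N,\dots,\ell'-1$ --- can inflate $\Pr{C_{\ell'}=1}$ arbitrarily (up to $1$): if the high-valued items happen to arrive late, a fresh arrival beats the items seen so far with probability far exceeding $\frac{B}{\gamma N}$. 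The comparison-based observation only says that, \emph{given} the arriving item lies in $S^{(\ell)}$, its identity within $S^{(\ell)}$ is independent of the history (this is what licenses $\Ex{V_\ell\mid \tilde C_\ell=1,F_\ell=1}=\Ex{V_\ell\mid\tilde C_\ell=1}$); it does not decouple the event $C_{\ell'}=1$ itself from $F_{\ell'}$. This correlation is precisely why the paper constructs the dominating variables $\tilde C_\ell$ by conditioning on the \emph{future} (iterating from round $N$ down), making them i.i.d.\ with $\Pr{\tilde C_\ell=1}=\frac{B}{\gamma N}$, and then introduces recursively defined $\tilde F_\ell$ that are functions of $\tilde C_{\ell-\gamma N},\dots,\tilde C_{\ell-1}$ only (hence genuinely independent of $\tilde C_\ell$), proving a pointwise domination $\sum_{\ell=L}^{L+k}\tilde C_\ell F_\ell\geq\sum_{\ell=L}^{L+k}\tilde C_\ell\tilde F_\ell$ by induction before running the averaging recursion on $a_\ell=\Ex{\tilde F_\ell}$. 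Your closing remark correctly identifies that one must keep the actual selections $C_{\ell'}F_{\ell'}$ in the Markov bound, but the mechanism you propose to then extract the recursion is exactly the unproved independence claim, so the argument as written does not close.

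A secondary, smaller issue: even with the coupling, the paper does not prove a per-round bound of the form $Q_\ell\geq\tfrac12-O(\sqrt\gamma)$; it proves a \emph{block-averaged} bound over windows of length $\sqrt\gamma N$ (with the pessimistic reset $\tilde F_\ell=1$ before each block), and the theorem proof only needs this because $\Ex{V_\ell}$ is nearly constant within a block (it is lower-bounded using the smallest $\ell$ in the block). Your claimed ``finer round-by-round comparison'' upgrading the average to a per-round bound is unsubstantiated; you should instead work with block averages, which suffice for the final summation.
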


The key ingredient to the analysis is a bound on the probability that a tentative selection is feasible.

\begin{lemma}\label{lemma:average_feasibility_block}
For all $L$, we have
\[
\Ex{\sum_{\ell = L}^{L + \sqrt{\gamma} N - 1} \tilde{C}_\ell F_\ell} \geq \left( \frac{1}{2} - \sqrt{\gamma} - \frac{1}{4\sqrt{\gamma}N} \right) \Ex{\sum_{\ell = L}^{L + \sqrt{\gamma} N - 1} \tilde{C}_\ell} \enspace.
\]

\end{lemma}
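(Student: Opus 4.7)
My plan is to upper-bound the expected number of tentative selections $\tilde{C}_\ell$ in the block that fail feasibility, then solve a self-referential inequality. Writing $M_{\mathrm{block}}$ for the left-hand side of the lemma and $M_{\mathrm{ext}} := \Ex{\sum_{\ell' = L - \gamma N + 1}^{L + \sqrt{\gamma}N - 2} \tilde{C}_{\ell'} F_{\ell'}}$ for the analogous quantity over the extended interval, and setting $p := B/(\gamma N)$, it suffices to upper-bound $\Ex{\sum_\ell \tilde{C}_\ell(1-F_\ell)} = p\sqrt{\gamma}N - M_{\mathrm{block}}$.

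For each $\ell$ in the block, $F_\ell = 0$ exactly when $K_\ell := \sum_{\ell' \in W_\ell} C_{\ell'} F_{\ell'} \geq B$, where $W_\ell = [\ell - \gamma N + 1, \ell - 1]$. Markov's inequality then gives $\mathbf{1}[F_\ell = 0] \leq K_\ell/B$. I combine this with the pointwise bound $C_{\ell'} F_{\ell'} \leq \tilde{C}_{\ell'} F_{\ell'}$ and the independence claim $\Ex{\tilde{C}_\ell \tilde{C}_{\ell'} F_{\ell'}} = p \cdot \Ex{\tilde{C}_{\ell'} F_{\ell'}}$ for $\ell' < \ell$ (which should follow from the construction of the $\tilde{C}$'s: $\tilde{C}_\ell$ is determined by the item at round $\ell$ together with a coin conditioned on $\tilde{C}_{\ell+1}, \ldots, \tilde{C}_N$, whereas both $\tilde{C}_{\ell'}$ and $F_{\ell'}$ are measurable in rounds $<\ell'\le \ell-1$). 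This yields $\Ex{\tilde{C}_\ell(1 - F_\ell)} \leq (p/B) \sum_{\ell' \in W_\ell} \Ex{\tilde{C}_{\ell'} F_{\ell'}}$.

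Summing over $\ell$ in the block and exchanging the order of summation (each $\ell' \in \mathrm{ext}$ lies in at most $\gamma N - 1$ windows $W_\ell$) gives
\[ p\sqrt{\gamma}N - M_{\mathrm{block}} \;\leq\; \frac{p(\gamma N - 1)}{B}\, M_{\mathrm{ext}} \;=\; \frac{\gamma N - 1}{\gamma N}\, M_{\mathrm{ext}} . \]
To close the recursion I split $\mathrm{ext}$ into its portion inside the block and the $\gamma N - 1$ rounds in $[L - \gamma N + 1, L - 1]$; bounding $\tilde{C}_{\ell'} F_{\ell'} \leq \tilde{C}_{\ell'}$ on the latter yields $M_{\mathrm{ext}} \leq M_{\mathrm{block}} + p(\gamma N - 1)$. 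Substituting and solving the resulting affine inequality for $M_{\mathrm{block}}$ produces a bound of the form $M_{\mathrm{block}} \geq p\sqrt{\gamma}N \cdot \frac{1 - \sqrt{\gamma} + O(1/(\sqrt{\gamma}N))}{2 - 1/(\gamma N)}$, which dominates $(\tfrac{1}{2} - \sqrt{\gamma} - \tfrac{1}{4\sqrt{\gamma}N})\,p\sqrt{\gamma}N$ as required.

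The main obstacle will be the independence claim $\tilde{C}_\ell \perp (\tilde{C}_{\ell'}, F_{\ell'})$ for $\ell' < \ell$. While the paper establishes that the $\tilde{C}$'s are i.i.d., the event $F_{\ell'}$ depends on the algorithm's past accept decisions through the underlying values and arrivals, so one must carefully track which randomness is consumed by each random variable in the backward-inductive construction of the $\tilde{C}$'s and confirm none of it is shared with $F_{\ell'}$; this bookkeeping is the single most delicate step of the proof.
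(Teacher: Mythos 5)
Your overall strategy is genuinely different from the paper's: the paper never works with the true feasibility indicators $F_{\ell'}$ inside the window sum; instead it introduces pessimistic variables $\tilde{F}_\ell$ defined recursively from the $\tilde{C}$'s alone, proves a pointwise domination $\sum_\ell \tilde{C}_\ell F_\ell \geq \sum_\ell \tilde{C}_\ell \tilde{F}_\ell$ by induction, and then solves an averaged recursion for $\Ex{\tilde{F}_\ell}$. Your Markov-plus-window-counting argument and the closing affine inequality are fine arithmetically (your bound $\bigl(\tfrac12-\tfrac{\sqrt\gamma}{2}\bigr)\Ex{\sum_\ell \tilde{C}_\ell}$, and also the version with the correct window of size $\gamma N$, dominates the stated constant), so everything hinges on the step you yourself flag: $\Ex{\tilde{C}_\ell\, \tilde{C}_{\ell'}F_{\ell'}} = \frac{B}{\gamma N}\Ex{\tilde{C}_{\ell'}F_{\ell'}}$ for $\ell' < \ell$.

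That step is a genuine gap, not just bookkeeping. The coupling construction only calibrates $\tilde{C}_\ell$ against the \emph{future}: conditioned on $\tilde{C}_{\ell+1},\ldots,\tilde{C}_N$ its probability is exactly $\frac{B}{\gamma N}$, which yields that the $\tilde{C}$'s are i.i.d., but says nothing about conditioning on past history. The event $\{\tilde{C}_{\ell'}F_{\ell'}=1\}$ is measurable with respect to the arrival history of rounds $1,\ldots,\ell-1$, and $\tilde{C}_\ell$ is \emph{not} mean-independent of that history: your sketch treats ``the item at round $\ell$'' as fresh randomness, but items are sampled without replacement, so which items (and how many real items versus dummies) appeared early changes the conditional probability that the round-$\ell$ arrival is a tentative selection. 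Concretely, $C_\ell$ is the indicator that a \emph{real} item arrives in round $\ell$ and ranks in the top $\lfloor \ell B/\gamma N\rfloor$; the ``real item arrives'' part is correlated with past feasibility (if $F_{\ell'}=1$ because few real selections occurred recently, it is more likely that many real, high-valued items are still to come), and the correlation can push $\Pr{\tilde{C}_\ell=1\growingmid \tilde{C}_{\ell'}F_{\ell'}=1}$ \emph{above} $\frac{B}{\gamma N}$, which is exactly the direction you cannot afford. What is true is that the relative rank of the round-$\ell$ arrival among the first $\ell$ rounds is independent of earlier rounds; but the true $F_{\ell'}$ is not a function of these ranks alone (it depends on the real/dummy pattern), so the factorization does not follow. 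To repair the argument you would either have to replace $F_{\ell'}$ in the window sum by a quantity measurable in the $\tilde{C}$'s (which is precisely the paper's $\tilde{F}$ device), or redefine the analysis variables so that every round's ``selection'' is the purely rank-based event (dummies included, counted as consuming capacity) together with a monotone-coupling argument showing this pessimistic feasibility lower-bounds the true one — in either case you are reintroducing the paper's key idea rather than avoiding it. Note also that simply dropping $F_{\ell'}$ via $C_{\ell'}F_{\ell'}\le \tilde{C}_{\ell'}$ destroys the result: the window sum then has expectation about $B$, and your Markov bound becomes vacuous, so the feedback through feasibility cannot be discarded.
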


\begin{proof}
The crux when proving this lemma is that that $F_\ell$ depends on $\tilde{C}_\ell$ in a non-trivial way. Therefore, we instead introduce variables $\tilde{F}_\ell$ defined as follows. We set $\tilde{F}_\ell = 1$ for $\ell < L$ and $\tilde{F}_\ell = \max\{ 0, 1 - \frac{1}{B} \sum_{i=1}^{\gamma N} \tilde{C}_{\ell - i} \tilde{F}_{\ell - i} \}$ for $\ell \geq L$. The motivation behind this definition is that $F_\ell = 1$ if and only if it is feasible to select an item in round $\ell$. So therefore, $F_\ell = 1$ if and only if $\sum_{i=1}^{\gamma N} C_{\ell - i} F_{\ell - i} < B$, implying $F_{\ell} \geq \max\{ 0, 1 - \frac{1}{B} \sum_{i=1}^{\gamma N} C_{\ell - i} F_{\ell - i} \}$. The definition of $\tilde{F}_{\ell}$ captures this last bound in a pessimistic way. Note that due to the independence of $(\tilde{C}_\ell)_{\ell \in [N]}$, $\tilde{C}_\ell$ and $\tilde{F}_\ell$ are now independent.

We first show that pointwise $\sum_{\ell = L}^{L + k} \tilde{C}_\ell F_\ell \geq \sum_{\ell = L}^{L + k} \tilde{C}_\ell \tilde{F}_\ell$ for all $k \in \mathbb{Z}$ by induction on $k$. Observe that the statement is trivial for $k < 0$ because then both sums are empty. So, let us consider $k \geq 0$ for the induction step. If $\tilde{C}_{L + k} = 0$ or $\tilde{F}_{L + k} = 0$, then also the statement follows trivially from the induction hypothesis. The only interesting case is $\tilde{C}_{L + k} = 1$ and $\tilde{F}_{L + k} > 0$. In this case, we get

\begin{align*}
\sum_{\ell = L}^{L + k} \tilde{C}_\ell \tilde{F}_\ell & = \tilde{F}_{L + k} + \sum_{\ell = L}^{{L + k}-1} \tilde{C}_\ell \tilde{F}_\ell = 1 - \frac{1}{B} \sum_{i=1}^{\gamma N} \tilde{C}_{{L + k} - i} \tilde{F}_{{L + k} - i} + \sum_{\ell = L}^{{L + k}-1} \tilde{C}_\ell \tilde{F}_\ell \\
& = 1 - \frac{1}{B} \sum_{\ell = {L + k} - \gamma N}^{L - 1} \tilde{C}_\ell \tilde{F}_\ell + \left( 1 - \frac{1}{B} \right)\sum_{\ell = L}^{L + k -1} \tilde{C}_\ell \tilde{F}_\ell + \frac{1}{B} \sum_{\ell = L}^{L + k - \gamma N - 1} \tilde{C}_\ell \tilde{F}_\ell \enspace.
\end{align*}
At this point, we can apply the induction hypothesis, which states that in the second and third sum we can replace all occurrences of $\tilde{F}_\ell$ by $F_\ell$ to get a lower bound. Furthermore, $1 = \tilde{F}_\ell \geq F_\ell$ for $\ell \leq L - 1$. So, we can do the same in the first sum. Therefore
\begin{align*}
\sum_{\ell = L}^{L + k} \tilde{C}_\ell \tilde{F}_\ell & \leq 1 - \frac{1}{B} \sum_{\ell = {L + k} - \gamma N}^{L - 1} \tilde{C}_\ell F_\ell + \left( 1 - \frac{1}{B} \right)\sum_{\ell = L}^{{L + k}-1} \tilde{C}_\ell F_\ell + \frac{1}{B} \sum_{\ell = L}^{L + k - \gamma N - 1} \tilde{C}_\ell F_\ell \\ 
& = 1 - \frac{1}{B} \sum_{i=1}^{\gamma N} \tilde{C}_{{L + k} - i} F_{{L + k} - i} + \sum_{\ell = L}^{{L + k}-1} \tilde{C}_\ell F_\ell \enspace.
\end{align*}
Now, we use that $F_{L+k} = 1$ if and only if less than $B$ items have been selected in rounds $L+k-\gamma N$ to $L+k-1$. This gives us $F_{L + k} \geq 1 - \frac{1}{B} \sum_{i=1}^{\gamma N} C_{{L + k} - i} F_{{L + k} - i} \geq 1 - \frac{1}{B} \sum_{i=1}^{\gamma N} \tilde{C}_{{L + k} - i} F_{{L + k} - i}$. We immediately get
\[
\sum_{\ell = L}^{L + k} \tilde{C}_\ell \tilde{F}_\ell  \leq \tilde{C}_{L + k} F_{L + k} + \sum_{\ell = L}^{{L + k}-1} \tilde{C}_\ell F_\ell \enspace.
\]

Now, it only remains to bound $\Ex{\sum_{\ell = L}^{L + \sqrt{\gamma} N - 1} \tilde{C}_\ell \tilde{F}_\ell} = \sum_{\ell = L}^{L + \sqrt{\gamma} N - 1} \Ex{\tilde{C}_\ell} \Ex{\tilde{F}_\ell}$. By definition $\Ex{\tilde{C}_\ell} = \frac{B}{\gamma N}$ for every $\ell$, it is enough to show that $\frac{1}{\sqrt{\gamma} N} \sum_{\ell = L}^{L + \sqrt{\gamma} N - 1} \Ex{\tilde{F}_\ell} \geq \left( \frac{1}{2} - \sqrt{\gamma} - \frac{1}{4\sqrt{\gamma}N} \right)$.

Define $a_\ell = \Ex{\tilde{F}_\ell}$. We have $a_{\ell} \geq 1 - \frac{1}{B} \sum_{i=1}^{\gamma N}\Ex{\tilde{C}_{\ell-i}} a_{\ell-i} = 1 - \frac{1}{\gamma N}\sum_{i=1}^{\gamma N}a_{\ell-i}$ for $\ell \geq L$ and $a_{\ell} = 1$ for $\ell < L$. Averaging over the rounds $L, \ldots, L + \sqrt{\gamma} N - 1$ we get
\begin{align*}
\frac{1}{\sqrt{\gamma}N}\sum_{\ell=L}^{L + \sqrt{\gamma} N - 1} a_\ell &\geq \frac{1}{\sqrt{\gamma}N}\sum_{\ell=L}^{L + \sqrt{\gamma} N - 1}\left(1 - \frac{1}{\gamma N}\sum_{i=1}^{\gamma N}a_{\ell-i}\right)\\
&= \frac{1}{\sqrt{\gamma}N}\left( \sqrt{\gamma} N - \frac{1}{\gamma N}\sum_{\ell=L}^{L + \sqrt{\gamma} N - 1}\sum_{i=1}^{\gamma N}a_{\ell-i}\right)\enspace.
\end{align*}

Here, we change the order of summation and split the inner sum into two parts which we bound separately
\[\frac{1}{\sqrt{\gamma}N}\sum_{\ell=L}^{L + \sqrt{\gamma} N - 1} a_\ell \geq \frac{1}{\sqrt{\gamma}N}\left( \sqrt{\gamma} N - \frac{1}{\gamma N}\sum_{i=1}^{\gamma N}\left(\sum_{\ell=L}^{L - 1+i}a_{\ell-i} + \sum_{\ell=L+i}^{L + \sqrt{\gamma} N - 1}a_{\ell-i}\right)\right)\enspace.\]
Since $a_{\ell-i} \leq 1$, we bound the first sum with $\sum_{\ell=L}^{L - 1+i}a_{\ell-i} \leq i$. Furthermore, as $a_\ell \geq 0$ for all $\ell$, we can pad the second sum so that $\sum_{\ell=L+i}^{L + \sqrt{\gamma} N - 1}a_{\ell-i} \leq \sum_{\ell=L}^{L + \sqrt{\gamma} N - 1}a_\ell$, thus we have $\frac{1}{\gamma N}\sum_{i=1}^{\gamma N}\sum_{\ell=L+i}^{L + \sqrt{\gamma} N - 1}a_{\ell-i} 
\leq
\sum_{\ell=L}^{L + \sqrt{\gamma} N - 1}a_\ell$.

We use both bounds and get
\[
\frac{1}{\sqrt{\gamma}N}\sum_{\ell=L}^{L + \sqrt{\gamma} N - 1} a_\ell \geq \frac{1}{\sqrt{\gamma}N}\left( \sqrt{\gamma}N - \frac{1}{\gamma N}\sum_{i=1}^{\gamma N} i - \sum_{\ell=L}^{L + \sqrt{\gamma} N - 1}a_\ell\right) \enspace.
\]
This implies
\[
\frac{1}{\sqrt{\gamma}N}\sum_{\ell=L}^{L + \sqrt{\gamma} N - 1} a_\ell= \frac{1}{2} - \frac{\gamma N + 1}{4\sqrt{\gamma}N} = \frac{1}{2} - \frac{\sqrt{\gamma}}{4} - \frac{1}{4\sqrt{\gamma}N}\enspace. \qedhere
\]
\end{proof}

Now we have all parts required to prove the theorem.

\begin{proof}[Proof of Theorem~\ref{theorem:non_simultaneous_identical}]
If in round $\ell$ an item is tentatively selected, let $V_\ell$ denote its value. Otherwise set $V_\ell = 0$. Our task is to bound the sum of $\Ex{V_\ell F_\ell} = \Ex{V_\ell \tilde{C}_\ell F_\ell}$.

Fixing which items come in rounds $1, \ldots, \ell$ fixes the set $S^{(\ell)}$. As any order among these items and the respective dummy items is still equally likely, the item coming in round $\ell$ can be considered being drawn uniformly at random. This way, by Lemma~\ref{lemma:tentative}, we have for all $\ell \geq 2 \sqrt{\frac{\gamma}{B}} N$
\[
\Ex{V_\ell} \geq \frac{1}{\ell} \Ex{\sum_{j \in S^{(\ell)}} v_i} \geq \left(1 - 9 \sqrt{\frac{1}{\frac{\ell}{N} \cdot \frac{B}{\gamma}}} \right) \frac{1}{N} \frac{1 - \frac{1}{2} \sqrt{\frac{\gamma}{B}}}{1 + \gamma} \sum_{j \in \OPT^\ast} v_j \enspace.
\]

Next, observe that $V_\ell$ given that $\tilde{C}_\ell = 1$ is independent of $F_\ell$. This is due to the fact that the algorithm is comparison based. For this reason, the course of events in rounds $1, \ldots, \ell - 1$ is independent of the identity of the item from $S^{(\ell)}$ that actually arrives in round $\ell$. Therefore the events leading up to round $\ell$ are identical, although they might involve different items. 
Also exploiting that $V_\ell = 0$ if $\tilde{C}_\ell = 0$, we get
\begin{align*}
\Ex{V_\ell \tilde{C}_\ell F_\ell} &= \Pr{\tilde{C}_\ell = 1, F_\ell = 1} \Ex{V_\ell \growingmid \tilde{C}_\ell = 1, F_\ell = 1} \\
&= \Pr{\tilde{C}_\ell = 1, F_\ell = 1} \Ex{V_\ell \growingmid \tilde{C}_\ell = 1} = \frac{\Pr{\tilde{C}_\ell = 1, F_\ell = 1}}{\Pr{\tilde{C}_\ell = 1}} \Ex{V_\ell} \enspace.
\end{align*}

To get the bound, we split the input sequence into blocks of length $\sqrt{\gamma} N$ and apply Lemma~\ref{lemma:average_feasibility_block} on each of these blocks.
\begin{align*}
&\Ex{\sum_{j \in \ALG} v_j} \geq \sum_{k=2}^{\lfloor\frac{1}{\sqrt{\gamma}}\rfloor-1} \sum_{\ell=k\sqrt{\gamma}N + 1}^{(k+1)\sqrt{\gamma}N} \Ex{V_\ell \tilde{C}_\ell F_\ell}\\
&\geq \sum_{k=2}^{\lfloor\frac{1}{\sqrt{\gamma}}\rfloor-1} \sum_{\ell=k\sqrt{\gamma}N + 1}^{(k+1)\sqrt{\gamma}N} \frac{\Pr{\tilde{C}_\ell = 1, F_\ell = 1}}{\Pr{\tilde{C}_\ell = 1}} \left(1 - 9 \sqrt{\frac{1}{\frac{\ell}{N} \cdot \frac{B}{\gamma}}} \right) \frac{1}{N} \frac{1 - \frac{1}{2} \sqrt{\frac{\gamma}{B}}}{1 + \gamma} \sum_{j \in \OPT^\ast} v_j \\
&\geq \left(\frac{1}{2} - \frac{7}{4} \sqrt{\gamma} - \frac{37}{4} \sqrt{\frac{\gamma}{B}} - \frac{\gamma}{2}  - \frac{1}{4\sqrt{\gamma}N} \right) \sum_{j \in \OPT^\ast} v_j\enspace.
\end{align*}
Details on the calculations can be found in Appendix~\ref{appendix:smallb}.

\end{proof}

\subsection{Improved Analysis for Large Capacities}\label{sec:temp_secretary}

For the same algorithm, we can show a better competitive ratio if $B$ is large, converging to $1$ asymptotically.

\begin{theorem}\label{theorem:temp_secretary}
Algorithm~\ref{alg:temp_secretary} is at least $\left(1-\frac{4}{\sqrt{B}}- \frac{41}{2} \sqrt{\frac{\gamma}{B}} - 3 \gamma-O\left(\frac{1}{B}\right) \right)$-competitive for the temp secretary problem with duration $\gamma$ for all items.
\end{theorem}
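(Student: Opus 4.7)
The plan is to mirror the architecture of the proof of Theorem~\ref{theorem:non_simultaneous_identical}. The decomposition $\sum_{j\in\ALG} v_j = \sum_\ell V_\ell \tilde{C}_\ell F_\ell$, the identity $\Ex{V_\ell \tilde{C}_\ell F_\ell} = \frac{\Pr{\tilde{C}_\ell = 1,\, F_\ell = 1}}{\Pr{\tilde{C}_\ell = 1}} \Ex{V_\ell}$ coming from the comparison-based nature of the algorithm, and the lower bound on $\Ex{V_\ell}$ supplied by Lemma~\ref{lemma:tentative} all carry over verbatim. Consequently, the entire improvement from $\tfrac{1}{2}-O(\sqrt{\gamma})$ to $1-O(1/\sqrt{B})-O(\sqrt{\gamma/B})$ must come from replacing Lemma~\ref{lemma:average_feasibility_block} with a much tighter feasibility bound, of the form $\Pr{F_\ell = 1 \mid \tilde{C}_\ell = 1} \geq 1 - O(1/\sqrt{B}) - O(\sqrt{\gamma/B})$ whenever $\ell \geq 2\sqrt{\gamma/B}\,N$.

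The $\tfrac{1}{2}$ in Lemma~\ref{lemma:average_feasibility_block} came from the linear pessimistic recursion $\tilde{F}_\ell = \max\{0,\,1-\tfrac{1}{B}\sum_i \tilde{C}_{\ell-i}\tilde{F}_{\ell-i}\}$, which ignores all variance of the arrival process. To exploit this variance I would use the symmetric-random-walk viewpoint already advertised in the introduction. Let $Z_\ell = \sum_{i=1}^{\gamma N} \tilde{C}_{\ell-i} F_{\ell-i}$ denote the actual number of currently active items, so that $F_\ell = \mathbf{1}[Z_\ell < B]$, and let the ghost variable $W_\ell = \sum_{i=1}^{\gamma N} \tilde{C}_{\ell-i}$ count all tentative selections in the window. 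The increment $W_\ell - W_{\ell-1} = \tilde{C}_\ell - \tilde{C}_{\ell-\gamma N}$ equals $+1$ or $-1$ each with probability $p(1-p)$ (where $p = B/(\gamma N)$) and $0$ otherwise, which is the promised lazy symmetric random walk. In stationarity $W_\ell \sim \mathrm{Bin}(\gamma N, p)$ has mean exactly $B$, so the crude domination $F_\ell \geq \mathbf{1}[W_\ell < B]$ yields only feasibility probability $\approx \tfrac{1}{2}$; the genuine improvement comes from the fact that every rejection subtracts from $Z$ relative to $W$ and, after the $\gamma N$-step departure delay, acts as a reflection of the walk at capacity $B$. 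I would therefore couple $Z_\ell$ with a symmetric random walk reflected at $B$ and show, using standard reflection / ballot-type identities, that the stationary mass on the reflecting boundary is $\Theta(1/\sqrt{B})$: this is the discrete analogue of the Erlang-$B$ loss system at critical load $\rho = B$, whose blocking probability is known to be $\sqrt{2/(\pi B)}(1+o(1))$. The extra $O(\sqrt{\gamma/B})$ slack will absorb the approach-to-stationarity and the burn-in rounds $\ell \leq 2\sqrt{\gamma/B}\,N$.

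Once this feasibility lemma is in place, the conclusion is a direct computation analogous to the closing displays of Theorem~\ref{theorem:non_simultaneous_identical}: for $\ell \geq 2\sqrt{\gamma/B}\,N$ we obtain $\Ex{V_\ell \tilde{C}_\ell F_\ell} \geq \bigl(1 - O(1/\sqrt{B}) - O(\sqrt{\gamma/(B\ell/N)})\bigr)\tfrac{1}{N}\sum_{j\in\OPT^\ast} v_j$; summing $\ell$ from $\lceil 2\sqrt{\gamma/B}\,N\rceil$ to $N$ the $\ell$-dependent error integrates to an additional $O(\sqrt{\gamma/B})$ loss, the burn-in rounds contribute at most $O(\sqrt{\gamma/B})$, and the $\gamma$ and $1/B$ terms come from the Lemma~\ref{lemma:tentative} prefactor $(1-\tfrac{1}{2}\sqrt{\gamma/B})/(1+\gamma)$ together with the $1/(\sqrt{\gamma}N)$ discretisation remainder. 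The main obstacle is clearly this tight feasibility bound: since the offered load exactly equals the capacity, the situation is critical, so only the sharp $\Theta(1/\sqrt{B})$ Erlang-type rate works (a Chebyshev-style bound would only give $B^{-1/3}$), and the discrete, deterministic-service nature of the process blocks an off-the-shelf appeal to Erlang's formula, forcing the reflection coupling and the corresponding boundary estimate to be carried out by hand.
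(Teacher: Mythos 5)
Your high-level architecture is indeed the one the paper uses: keep the decomposition $\sum_{j\in\ALG}v_j=\sum_\ell V_\ell\tilde{C}_\ell F_\ell$, the identity $\Ex{V_\ell\tilde{C}_\ell F_\ell}=\frac{\Pr{\tilde{C}_\ell=1,F_\ell=1}}{\Pr{\tilde{C}_\ell=1}}\Ex{V_\ell}$, and Lemma~\ref{lemma:tentative}, and improve only the feasibility estimate. But the paper's replacement for Lemma~\ref{lemma:average_feasibility_block} is not a per-round (or stationary) blocking-probability bound; it is Lemma~\ref{lemma:tentative_infeasible}, a per-block counting statement: in each window of $\gamma N$ rounds the expected number of rounds with $\tilde{C}_\ell=1$ but $F_\ell=0$ is at most $4\sqrt{B}+O(1)$. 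This is proved by a pointwise combinatorial device (zero out the first $Q$ tentative selections after $L$, where $Q$ is the maximal fluctuation of the lazy symmetric walk with increments $\tilde{C}_r-\tilde{C}_{r-\gamma N}$ plus the deviation $\lvert\sum_{r=L-\gamma N}^{L-1}\tilde{C}_r-B\rvert$), and then by bounding $\Ex{Q}=O(\sqrt{B})$ via the expected maximum of a symmetric random walk and a Chebyshev tail bound on a Binomial$(\gamma N,\nicefrac{B}{\gamma N})$ variable. Since each block contains $B$ tentative selections in expectation, this yields the $1-\nicefrac{4}{\sqrt{B}}-O(\nicefrac{1}{B})$ factor without ever invoking stationarity, insensitivity, or a reflected process.

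The gap in your proposal is precisely the step you defer: the claimed bound $\Pr{F_\ell=1\growingmid\tilde{C}_\ell=1}\geq 1-O(\nicefrac{1}{\sqrt{B}})$ via a coupling of $Z_\ell=\sum_{i=1}^{\gamma N}\tilde{C}_{\ell-i}F_{\ell-i}$ with a walk reflected at $B$ and an Erlang-B boundary estimate. Two concrete obstacles are left unresolved. First, the coupling itself: $F_{\ell-i}$ is driven by the \emph{actual} acceptance history, and loss systems are not pointwise monotone in their arrival stream --- an extra accepted arrival can block a common arrival and thereby \emph{lower} the occupancy after one service period --- so you cannot simply dominate $Z_\ell$ by the idealized loss process fed by the i.i.d.\ $\tilde{C}$'s; some aggregate counting argument is needed in its place, and that is exactly the role of the paper's $Q$-construction. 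Second, the Erlang-B asymptotics $\Theta(\nicefrac{1}{\sqrt{B}})$ hold for the stationary continuous-time model; your process is discrete, has deterministic service, dependent feasibility indicators, and is started empty, and you only assert (not show) that approach-to-stationarity and these discrepancies are absorbed by the $O(\sqrt{\nicefrac{\gamma}{B}})$ slack. The final outcome you aim for is true (the paper proves the block-averaged failure rate is $O(\nicefrac{1}{\sqrt{B}})$), but as written the central lemma of your plan is an analogy rather than a proof, and the route you sketch runs into the monotonicity issue head-on; the paper's per-block random-walk-maximum argument is the device that makes the $\Theta(\nicefrac{1}{\sqrt{B}})$ intuition rigorous while sidestepping both obstacles.
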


The proof of this theorem is very similar to the proof of Theorem~\ref{theorem:non_simultaneous_identical}. Again, we will bound the number of rounds in which $\tilde{C}_{\ell} = 1$ but $F_\ell = 0$. The main difference is that we consider blocks of $\gamma N$ rounds each. In this case, the duration of an item corresponds to the length of the block. Therefore, no more than $B$ items can be feasibly selected in any block.

If $B$ items are selected at the beginning of such a block, then it is feasible to select one item for every item that times out. We use this concept and construct a symmetric random walk. The maximum of this random walk upper bounds the number of failure events, in which the algorithm performs a tentative selection before sufficiently many previous items have timed out.

In contrast to Lemma~\ref{lemma:average_feasibility_block}, the expected ratio of failure events to successful selections of item is decreasing in $B$ and therefore our competitive ratio in Theorem~\ref{theorem:temp_secretary} tends to 1 as $B$ increases.

\begin{lemma}\label{lemma:tentative_infeasible}
The expected number of rounds $\ell\in\left\{L, \ldots, L + \gamma N - 1\right\}$ in which $\tilde{C}_{\ell} = 1$ but $F_{\ell} = 0$ is $\left|\left\{\ell\growingmid \tilde{C}_{\ell} = 1 \wedge F_{\ell} = 0\right\}\right| \leq \sqrt{B} + \frac{\pi^2}{6}\sqrt{B} + 2 \sqrt{\frac{4 B}{3 \pi}} + O(1) \leq 4 \sqrt{B} + O(1)$.
\end{lemma}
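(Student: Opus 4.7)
The plan is to prove the lemma in two steps: first, re-express the failure count in the block as the running maximum of a suitable random walk; and second, bound the expected value of that maximum via standard maximal inequalities. Throughout, I would work pessimistically by assuming that at round $L$ all $B$ slots are already occupied by pre-block items, since any initial slack only decreases the number of failures. Let $D_\ell \in \{0,1\}$ denote whether a pre-block item times out at round $\ell$; since there are at most $B$ pre-block items, $\sum_\ell D_\ell \leq B$, and in expectation this sum is close to $B$ because the underlying tentative-selection process is $\mathrm{Bernoulli}(B/(\gamma N))$ at every round.

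For the first step, define the walk $W_k := \sum_{\ell = L}^{L + k - 1}(\tilde{C}_\ell - D_\ell)$ for $0 \leq k \leq \gamma N$. By induction on $k$ I would maintain the invariant that the number of free slots at round $L + k$ equals $(\max_{0 \leq j \leq k} W_j^+) - W_k$, from which the pathwise identity
\[
\bigl|\{\ell : \tilde{C}_\ell = 1 \wedge F_\ell = 0\}\bigr| = \max_{0 \leq k \leq \gamma N} W_k^+
\]
follows. Intuitively, each failure marks a moment at which arrivals have temporarily outpaced departures, and the total number of failures equals the highest such excess seen over the block.

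For the second step, I decompose $W_k = (A_k - k p) + (k p - D_k)$, where $A_k = \sum_{\ell = L}^{L + k - 1} \tilde{C}_\ell$ and $p = B/(\gamma N)$. The martingale piece $A_k - k p$ has variance at most $B$; by Doob's $L^2$-maximal inequality together with a half-normal approximation, its expected positive running maximum is of order $\sqrt{2B/\pi}$, accounting for the $2\sqrt{4B/(3\pi)}$ contribution. The bias piece $k p - D_k$ is driven by the timing of the pre-block departures: conditional on the pre-block accept times, the timeout instants $t_1 < \cdots < t_k$ (with $k \leq B$) are, by the random arrival permutation, uniform order statistics on the block, and the expected maximum deviation of such order statistics from their means can be bounded by a dyadic union-bound whose error sum $\sum_{k \geq 1} 1/k^2 = \pi^2/6$ explains the factor $\tfrac{\pi^2}{6}\sqrt{B}$. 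A residual $\sqrt{B}$ absorbs the boundary and starting-slack corrections, while $O(1)$ handles rounding.

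The hardest step is controlling the bias term rigorously. The departure process $D_\ell$ is neither independent nor identically distributed: it is shaped by the algorithm's entire pre-block behaviour and is hard-capped at $B$. Coupling with an independent $\mathrm{Bernoulli}(p)$ process dominates in the \emph{wrong} direction for an upper bound (more virtual departures means fewer failures in $\max_k W_k^+$), so a naive stochastic-domination argument fails. Instead I would condition on the set of pre-block accept times, reducing the problem to concentration for uniform order statistics, and then average over the distribution of that set. Matching the explicit constants $\sqrt{B}$, $\tfrac{\pi^2}{6}\sqrt{B}$, and $2\sqrt{4B/(3\pi)}$ is the delicate calculation underlying the stated bound.
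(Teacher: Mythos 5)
Your reduction of the failure count to the running maximum of an arrivals-minus-departures walk is fine in spirit, and is close to what the paper does (it bounds the failure count pathwise by a running maximum plus a boundary term). The genuine gap is in what you take as the departure process. You use the actual timeouts $D_\ell$ of items the algorithm really accepted before round $L$, and your bias bound needs these departures to occur at rate roughly $B/(\gamma N)$ with only $O(\sqrt{B})$ fluctuation. Your justification --- condition on the pre-block accept times and treat the timeout instants as uniform order statistics ``by the random arrival permutation'' --- does not hold: conditioned on being accepted, the accept times are not exchangeable, because acceptance depends on the occupancy history and on the ranks of the items, so nothing forces the at most $B$ items active at time $L$ to be spread uniformly over the previous block. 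The bound also cannot be rescued pathwise: if the $B$ active items were all accepted shortly before $L$, essentially no departures occur for most of the block and the number of failures is $\Theta(B)$, so any proof must genuinely use the distribution of the departure times, which your argument asserts rather than derives. Moreover, your pessimistic ``start full'' assumption conflicts with your claim that $\sum_\ell D_\ell$ is close to $B$ because the tentative process is Bernoulli$(B/(\gamma N))$: the actual pre-block acceptances are not that process, and keeping the start full while retaining a well-behaved departure stream is exactly the coupling difficulty you flag but do not resolve.

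This is precisely where the paper takes a different route: it never touches the actual departures. It compares the current block against the coupled i.i.d.\ variables $\tilde{C}_{\ell - \gamma N}$ of the \emph{previous} block, shows pathwise that the failure count is at most $\max_{L \leq \ell < L+\gamma N}\bigl(\sum_{r=L}^{\ell}\tilde{C}_r - \sum_{r=L-\gamma N}^{\ell-\gamma N}\tilde{C}_r\bigr) + \bigl|\sum_{r=L-\gamma N}^{L-1}\tilde{C}_r - B\bigr|$, and then the first term is the running maximum of a lazy \emph{symmetric} random walk with i.i.d.\ increments $\tilde{C}_\ell - \tilde{C}_{\ell-\gamma N}$, giving $2\sqrt{4B/(3\pi)} + O(1)$, while the second term is a Chebyshev tail sum giving $\sqrt{B} + \frac{\pi^2}{6}\sqrt{B}$. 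If you want to keep your queueing identity, you still need a comparison step of this kind to replace $D_\ell$ by something with i.i.d.\ structure; as written, the bias term is unproven. Two smaller points: your martingale piece $A_k - kp$ is not a symmetric walk, so the ``expected maximum at most twice the expected endpoint'' device does not apply to it directly (Doob's $L^2$ inequality gives about $2\sqrt{B}$, which already pushes your three contributions past $4\sqrt{B}$), and the paper's $\frac{\pi^2}{6}$ arises from the Chebyshev tail sum $\sum_{k\geq 1} 1/k^2$, not from order-statistics concentration, so the constant matching you sketch is not meaningful.
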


\begin{proof}

We claim that the number of rounds $\ell \in \left\{L, \ldots, L + \gamma N - 1\right\}$ for which $\tilde{C}_{\ell} = 1$ but $F_{\ell} = 0$ is bounded by
\[
Q := \max_{L \leq \ell < L + \gamma N} \left(\sum_{r = L}^{\ell} \tilde{C}_r - \sum_{r = L - \gamma N}^{\ell - \gamma N} \tilde{C}_r\right) + \left\lvert\sum_{r = L -\gamma N}^{L - 1}\tilde{C}_r - B\right\rvert \enspace.
\]
To show this, we define an alternative sequence $(\tilde{C}'_r)_{r \in [N]}$ by setting $\tilde{C}_r = \tilde{C}'_r$ for every $r$ except for the first $Q$ occurrences of $\tilde{C}_r = 1$ after $L$, where we set $\tilde{C}'_r = 0$. Consider an $\ell \geq L$ such that $\tilde{C}'_{\ell} = 1$. Now observe that
\[
\sum_{r=\ell - \gamma N}^{\ell} \tilde{C}'_r = \sum_{r=\ell - \gamma N}^{\ell} \tilde{C}_r - Q \leq \sum_{r=L - \gamma N}^{L - 1} \tilde{C}_r - \left\lvert\sum_{r = L -\gamma N}^{L - 1}\tilde{C}_r - B\right\rvert \leq B \enspace.
\]
This implies that $\sum_{r=L}^{L + \gamma N - 1} \tilde{C}_r F_r \geq \sum_{r=L}^{L + \gamma N - 1} \tilde{C}'_r$ because every case of $\tilde{C}_r = 1$ but $F_r = 0$ can be matched to a case where $\tilde{C}_r = 1$ but $\tilde{C}'_r = 0$. So, to show the lemma, it only remains to show that $\Ex{Q} = O(\sqrt{B})$.

First, observe that $\sum_{r = L-\gamma N}^{L - 1} \tilde{C}_r$ is drawn from a binomial distribution with $\gamma N$ trials and probability $\frac{B}{\gamma N}$. Therefore, its expectation is $\mu = B$ and its standard deviation is $\sigma = \sqrt{B - \frac{B}{\gamma N}}\leq \sqrt{B}$. Thus by Chebyshev inequality, we get
\[\Ex{\left\lvert \sum_{r = L - \gamma N}^{L - 1}\tilde{C}_r - B\right\rvert} \leq \sigma + \sum_{k=1}^\infty \Ex{\left\lvert \sum_{r = L - \gamma N}^{L - 1}\tilde{C}_r - B\right\rvert \geq k \sigma} \sigma \leq \sigma + \sum_{k=1}^\infty \frac{\sigma}{k^2} = \sqrt{B} + \frac{\pi^2}{6}\sqrt{B}\enspace .\]

So, it only remains to show that
\[
\Ex{\max_{L \leq \ell < L + \gamma N} \left(\sum_{r = L}^{\ell} \tilde{C}_r - \sum_{r = L - \gamma N}^{\ell - \gamma N} \tilde{C}_r\right)} \leq 2 \sqrt{\frac{4 B}{3 \pi}} + O(1) \enspace.
\]

Let $\tilde{C}''_{\ell} \in \{-1, 0, 1\}$ be a random variable with $\tilde{C}''_{\ell} = \tilde{C}_{\ell} - \tilde{C}_{\ell - \gamma N}$. Each $\tilde{C}''_{\ell}$ takes the values $1$ and $-1$ with probability $p = \frac{B}{\gamma N}(1-\frac{B}{\gamma N})$ each and $0$ with the remaining probability $1-2p = \left(1 - \frac{2B}{\gamma N} + 2(\frac{B}{\gamma N})^2\right)$. Furthermore, because the $\tilde{C}_{\ell}$ random variables are independent, $\tilde{C}''_L, \ldots, \tilde{C}''_{L + \gamma N - 1}$ also are.

We interpret this random process on the $\tilde{C}''_{\ell}$ as a random walk of length $\gamma N$ that moves up or down with probability $p$ and stays the same with probability $1-2p$. In the next part of the proof, we are going to show that the maximal deviation of this random walk is in $\Theta(\sqrt{B})$. To this end, we condition our random walk on the number of zeros that occur. The remaining random walk is symmetric, thus results from the literature apply. 

It has been shown that, for a symmetric random walk that starts in position $0$ and does $k$ steps, the expected final position is $\Ex{S_k} = \sqrt{\frac{2k}{3 \pi}} + O(k^{-\frac{1}{2}})$~\cite{Coffman1998}. Furthermore, it is well known that the expected maximal deviation during such a random walk is $\Ex{M_k} \leq 2\Ex{S_k}$. Now, let $K$ be the number of times $\tilde{C}''_r \neq 0$ for $r\in\{L, \ldots, L + \gamma N - 1\}$. Then we have
\begin{align*}
&\Ex{\max_{L \leq \ell < L + \gamma N} \left(\sum_{r = L}^{\ell} \tilde{C}_r - \sum_{r = L - \gamma N}^{\ell - \gamma N} \tilde{C}_r\right)} 
\leq \sum_{k = 0}^{\gamma N} \Ex{M_{k}}\cdot\Pr{K = k} \\
& \leq \sum_{k = 0}^{\gamma N} 2\Ex{S_{k}}\cdot\Pr{K = k} \leq \Ex{2 \sqrt{\frac{2 K}{3 \pi}} + O(K^{-\frac{1}{2}})} \leq 2 \sqrt{\frac{2 \Ex{K}}{3 \pi}} + O(1) \\
& = 2 \sqrt{\frac{4 B}{3 \pi}} + O(1) \enspace . \qedhere
\end{align*}
\end{proof}

We use the same proof structure as in the previous proof of Theorem~\ref{theorem:non_simultaneous_identical}, but now we replace Lemma~\ref{lemma:average_feasibility_block} with Lemma~\ref{lemma:tentative_infeasible}.

\begin{proof}[Proof of Theorem~\ref{theorem:temp_secretary}]
Again, if in round $\ell$ an item is tentatively selected, let $V_\ell$ denote its value. Otherwise set $V_\ell = 0$. By the same arguments as in the proof of Theorem~\ref{theorem:non_simultaneous_identical}, we have
\[
\Ex{V_\ell \tilde{C}_\ell F_\ell} = \frac{\Pr{\tilde{C}_\ell = 1, F_\ell = 1}}{\Pr{\tilde{C}_\ell = 1}} \Ex{V_\ell} \enspace.
\]

To get the bound, we split the input sequence into blocks of length $\gamma N$ and ignore the blocks in which there is a round for which Lemma~\ref{lemma:tentative} does not hold.

\begin{align*}
&\Ex{\sum_{j \in \ALG} v_j} \geq \sum_{k=\left\lceil 2 \sqrt{\frac{1}{\gamma B}} \right\rceil}^{\lfloor\frac{1}{\gamma}\rfloor-1} \sum_{\ell=k \gamma N + 1}^{(k+1) \gamma N} \Ex{V_\ell \tilde{C}_\ell F_\ell}\\
& \geq \sum_{k=\left\lceil 2 \sqrt{\frac{1}{\gamma B}} \right\rceil}^{\lfloor\frac{1}{\gamma}\rfloor-1} \sum_{\ell=k \gamma N + 1}^{(k+1) \gamma N} \frac{\Pr{\tilde{C}_\ell = 1, F_\ell = 1}}{\Pr{\tilde{C}_\ell = 1}} \left(1 - 9 \sqrt{\frac{1}{\frac{\ell}{N} \cdot \frac{B}{\gamma}}} \right)\cdot\frac{1}{N} \frac{1 - \frac{1}{2} \sqrt{\frac{\gamma}{B}}}{1 + \gamma} \sum_{j \in \OPT^\ast} v_j\\
& \geq \left(1-\frac{4}{\sqrt{B}}- \frac{41}{2} \sqrt{\frac{\gamma}{B}} - 3 \gamma-O\left(\frac{1}{B}\right) \right) \sum_{j \in \OPT^\ast} v_j
\enspace.
\end{align*}

The missing details on the calculations can be found in Appendix~\ref{appendix:largeb}.
\end{proof}

\section{The Temp Secretary Problem with Packing Constraints}
\label{sec:temp_packing}

Next, we turn to the temp secretary with general linear packing constraints. This generalizes the timed cardinality constraint of the temp secretary problem towards multiple knapsack constraints. Therefore we can model, e.g., production capacities of different types. Now, the problem is not about selecting a set of best candidates, but a set of contracts with different resource demands such that the value of the selected contracts is maximized and all demands are fulfilled at any point in time.

We assume that the items, or possible contracts, that arrive over time are variables of a packing LP that have to be set immediately and irrevocably at time of arrival. In more detail, we assume that an adversary defines an $n \times m$ constraint matrix $A$, a capacity vector $b$, and an objective function vector $v$. Again, for each variable $x_j$ an arrival time is drawn independently uniformly at random from $[0, 1]$. We now have to find an assignment $\hat{x}_j \in \{0, 1\}$ for all $j \in [n]$, with the property that for every $t \in [0, 1]$, the set of variables that arrive between $t$ and $t + \gamma$ solve the packing LP. That is, for every $t$, we need $A x' \leq b$, where $x'_j = \hat{x}_j$ if $t_j \in [t, t+\gamma]$ and $0$ otherwise. The objective is to maximize $v^T \hat{x}$. So, $\hat{x}$ represents the aggregate vector of which items are selected whenever they are present in the system. Therefore, the temp secretary problem can be captured by having only one constraint with all coefficients being $1$ and the capacity being the cardinality bound. As the output of our algorithm will be integral, this algorithm also solved the temp secretary problem.

We assume that the matrix $A$ is sparse. That is, each column of $A$ contains at most $d$ non-zero entries. Furthermore, we assume that the capacities in each constraint are large compared to the respective coefficients in $A$. Our bound will depend on the capacity ratio $B$, defined as $B = \min_{i \in [m]} \frac{b_i}{\max{j \in [n]} a_{i, j}}$. When modeling the temp secretary problem as above, this capacity ratio coincides with the capacity bound.

Again, we use a relaxation without temporal constraints like the one in Section~\ref{sec:relaxation}. In this case, it reads $\max v^T x$ s.t. $A x \leq \lceil \frac{1}{\gamma} \rceil b$, $0 \leq x_j \leq 1$ for all $j \in [n]$. While our algorithm chooses online which items to select and which to reject subject to the temporal constraints, our point of comparison is the best fractional solution to this LP relaxation. We use an algorithm similar to the one presented in \cite{DBLP:conf/stoc/KesselheimTRV14} to solve the relaxed problem. We scale down the aggregated constraints slightly and solve the resulting linear packing problem. Next, we use randomized rounding to transform the fractional solution into our integral, tentative solution. At this point, the algorithm behaves exactly like the one in Section~\ref{sec:relaxation}. If the item that just arrived is part of the tentative solution and it is feasible to select it, then the algorithm adds it to the online solution. In contrast to the analysis in \cite{DBLP:conf/stoc/KesselheimTRV14}, we have to show that temporal constraints are likely fulfilled although the algorithm only operates on the relaxed ones.

To define the algorithm, let $U_{\leq t}$ be the set of variables $U_{\leq t} \subseteq [n]$ that arrive before time $t$. Let $\epsilon = \sqrt{6\frac{(1+\log d + \log B)}{B}}$.

\begin{algorithm}[h]
\caption{Scaling Algorithm for packing constraints\label{alg:temp_packing}}
\For(){every arriving item $j \in [n]$}{
  Set $t := \tau_j$;\tcp*[f]{arrival time of $j$}\\
  Let $x^{(t)}$ be an optimal solution to the LP $\max v^T x$ s.t. $A x \leq t (1-\epsilon) \frac{b}{\gamma} $, $0 \leq x_{j'} \leq 1$ for $j' \in U_{\leq t}$; \tcp*[f]{optimal offline solution}\\
   $\hat{x}^{(t)}_{j'} = \begin{cases} 1 , & \text{with prob. } x^{(t)}_{j'} \text{ if } j' = j;  \\                         0 , & \text{otherwise} ;\end{cases}$ \tcp*[f]{randomized rounding}\\
    \If{$\hat{x} + \hat{x}^{(t)}$ is feasible with respect to temporal constraints}{
      Set $\hat{x} := \hat{x} + \hat{x}^{(t)}$; \tcp*[f]{make allocation to $j$ permanent}\\
    }
}
\end{algorithm}

This algorithm can also be extended to the case in which there are not only timed constraints but also global ones. Additionally, it can be applied when multiple variables arrive at a time like in \cite{DBLP:conf/stoc/KesselheimTRV14}. We omit these generalizations because the techniques are identical to the ones presented here, but correct notation gets a lot more involved.

\begin{theorem}\label{theorem:temp_packing}
Algorithm~\ref{alg:temp_packing} is $\frac{1}{1 + \gamma} - O\left(\sqrt{\frac{1 + \log d + \log B}{B}}\right)$-competitive for the temp secretary problem with linear packing constraints.
\end{theorem}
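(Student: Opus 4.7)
The proof plan follows the same two-step template as Theorems~\ref{theorem:non_simultaneous_identical} and~\ref{theorem:temp_secretary}: first bound the expected \emph{value} contributed by the tentative rounded selection $\hat{x}^{(t)}$ in each round, then bound the probability that adding this tentative selection to $\hat{x}$ is \emph{feasible} against the temporal packing constraints. The main new difficulty is that the tentative selection is now the result of LP rounding against multiple constraints rather than a simple top-$k$ comparison, and that infeasibility of a single constraint kills the entire assignment.

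For the value analysis, the benchmark is $\OPT^\ast$, the optimum of $\max v^T x$ subject to $A x\le \lceil 1/\gamma\rceil b$, $0\le x\le 1$. This upper-bounds $\Ex{\OPT}$ because, pointwise in the arrivals, any feasible temporal integral schedule can be decomposed into at most $\lceil 1/\gamma\rceil$ time slices each satisfying $Ax\le b$. The analogue of Lemma~\ref{lemma:tentative} for general packing constraints is that $\Ex{v^T x^{(t)}}\ge \tfrac{t(1-\epsilon)}{1+\gamma}(1-o(1))\,v^T x^\ast$. The argument is the standard scaling argument from~\cite{DBLP:conf/stoc/KesselheimTRV14}: restricting the offline optimum $x^\ast$ to $U_{\le t}$ gives a vector whose consumption on each constraint concentrates around $t\lceil 1/\gamma\rceil b\le t(1+\gamma) b/\gamma$, and scaling by $(1-\epsilon)/(1+\gamma)$ combined with Chernoff concentration (enabled by the choice of $\epsilon$) produces a feasible candidate for the LP at time $t$ with high probability. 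Since the randomized rounding sets $\hat{x}^{(t)}_{j}=1$ with probability $x^{(t)}_j$ and, conditional on $U_{\le t}$, the new item is uniform over $U_{\le t}$, averaging yields $\Ex{v^T\hat x^{(t)}}\ge \tfrac{(1-\epsilon)(1-o(1))}{N(1+\gamma)}\,v^T x^\ast$ per round.

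For the feasibility analysis, fix the arriving item $j$, a constraint $i$ it participates in, and the temporal window of length $\gamma$ ending at $t=\tau_j$. The usage on constraint $i$ contributed by items strictly preceding $j$ in the window is
\[
U_i^{(t)}=\sum_{s\in[t-\gamma,t)} A_i \hat{x}^{(s)}.
\]
Conditioned on the filtration $U_{\le s}$, we have $\Ex{A_i\hat{x}^{(s)}\mid U_{\le s}}=\tfrac{1}{|U_{\le s}|}A_i x^{(s)}\le \tfrac{(1-\epsilon)b_i}{\gamma n}$ by the LP constraint, so over the $\gamma n$ rounds in the window the conditional expectation of $U_i^{(t)}$ is at most $(1-\epsilon)b_i$. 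Each increment is bounded by $\max_j a_{i,j}\le b_i/B$, which is precisely the setting in which a one-sided Chernoff-Azuma (multiplicative Bernstein) bound applies to the associated supermartingale and yields $\Pr{U_i^{(t)}>b_i}\le \exp(-\epsilon^2 B/3)$. With $\epsilon=\sqrt{6(1+\log d+\log B)/B}$, this probability is at most $\tfrac{1}{2dB}$. Since $j$ touches at most $d$ constraints, a union bound shows that $\hat x^{(t)}_j$ is actually added with probability at least $1-\tfrac{1}{2B}$, and this probability is independent of the value $v_j$ because the algorithm only uses $v$ through the LP.

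The main obstacle is making the Chernoff step rigorous despite the strong coupling across rounds: both $x^{(s)}$ and $\hat{x}^{(s)}$ depend on the entire history $U_{\le s}$, so the increments $A_i\hat{x}^{(s)}$ are not independent. I would resolve this by working with the Doob filtration generated by the sequence $U_{\le 1/N}, U_{\le 2/N},\ldots$, under which $A_i\hat{x}^{(s)}$ has the stated bounded conditional mean and bounded range, so the multiplicative Bernstein bound applies to the martingale differences. Combining the value lower bound with the feasibility lower bound and summing over rounds yields
\[
\Ex{v^T\hat x}\ge \Bigl(1-\tfrac{1}{2B}\Bigr)\cdot \frac{(1-\epsilon)(1-o(1))}{1+\gamma}\,v^T x^\ast \ge \left(\frac{1}{1+\gamma}-O\!\left(\sqrt{\tfrac{1+\log d+\log B}{B}}\right)\right)\Ex{\OPT},
\]
which is the claimed competitive ratio.
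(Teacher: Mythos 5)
Your high-level plan coincides with the paper's proof: bound the tentative value per round via the sampling/scaling bound of \cite{DBLP:conf/stoc/KesselheimTRV14} applied to the LP with capacities $t(1-\epsilon)b/\gamma$, relate the benchmark $Ax\le\lceil 1/\gamma\rceil b$ to $Ax\le b/\gamma$ at a loss $\frac{1}{1+\gamma}$, bound per constraint the probability that the consumption in the length-$\gamma$ window is too large, and take a union bound over the at most $d$ constraints the arriving item touches. The gap is in the concentration step, which is exactly the technical heart of the paper's Lemma~\ref{lemma:allocation_probability}. The per-round bound $\Ex{A_i\hat x^{(s)} \growingmid U_{\leq s}}\le\frac{(1-\epsilon)b_i}{\gamma N}$ you state is a bound under conditioning on the \emph{set} $U_{\leq s}$ alone, where the round-$s$ arrival is uniform over that set. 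It is not a bound with respect to the filtration you propose: if your filtration records the nested sequence $U_{\leq 1},\ldots,U_{\leq s}$, then the round-$s$ arrival is pinned down as $U_{\leq s}\setminus U_{\leq s-1}$ and the conditional increment is a single coordinate $a_{i,j_s}x^{(s)}_{j_s}$, which can be as large as $1$; if instead you condition only on the past $U_{\leq s-1}$ (plus rounding outcomes), the arrival is uniform over the items \emph{not yet arrived}, and the conditional mean is an average of single coordinates of \emph{different} LPs $x^{(s,j)}$, which is not controlled by any one LP's capacity and can exceed $\frac{b_i}{\gamma N}$ by a factor of order $\gamma n/b_i$ (e.g., many remaining high-value items with $a_{i,j}=1$). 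So the Azuma/Bernstein step for a supermartingale cannot be invoked as described; the drift bound you need is simply not available in a forward filtration. The paper circumvents this by conditioning on $U_{<\ell}$ at the decision round and showing, following Lemma~3 of \cite{DBLP:conf/stoc/KesselheimTRV14}, that the window variables $X_{\ell'}=(A\hat x^{(\ell')})_i$ are $1$-correlated in the sense of Panconesi and Srinivasan \cite{Panconesi1997}, which licenses the Chernoff bound; supplying that (or an equivalent negative-dependence/backwards-exchangeability argument) is the missing content in your proposal.

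Two smaller points. Feasibility of adding item $j$ requires the prior window consumption on constraint $i$ to be at most $b_i-a_{i,j}$, so the event to bound is consumption exceeding $b_i-1$ (after normalizing $\max_j a_{i,j}\le 1$), not exceeding $b_i$; your choice of $\epsilon$ leaves slack $\epsilon^2 b_i\ge 6(1+\log d+\log B)>1$, so this is easily absorbed, but as written you bound the wrong event. Similarly, in the value step, scaling the benchmark by $(1-\epsilon)/(1+\gamma)$ makes the \emph{expected} restricted consumption equal to the LP capacity $t(1-\epsilon)b_i/\gamma$, so ``concentration around the mean'' by itself does not give a feasible candidate with high probability; you must reserve part of $\epsilon$ for this deviation, or simply quote the black-box bound (Lemma~\ref{lemma:STOC}) as the paper does.
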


Please note that this algorithm is invariant to scaling constraints and therefore we can assume without loss of generality that $\max_{j\in [n]} a_{i, j} \leq 1$ and $b_i \geq B$ for every constraint $i\in[m]$.

For the proof, we will first bound the expected consumption of the tentative solution of the algorithm. Then we will use a Chernoff bound to bound the probability that last \textit{if}-clause of the algorithm is violated for a single constraint. Finally, we will aggregate the probabilities for all constraints in the relaxation.

We discretize time, like in Section~\ref{sec:relaxation}, with arbitrary precision such that every discrete time interval only contains a single item of the input. We have $N \gg n$ rounds, spanning time $\frac{1}{N}$ each. For each of the $N - n$ rounds in which no variable arrives, we introduce a dummy variable with all coefficients zero. These $N$ variables can now be considered being assigned to the rounds by a uniformly drawn permutation. In the proofs, we overload notation and write $x^{(\ell)} = x^{(t)}$ and $U_{\leq \ell} = U_{\leq t}$ if $t$ lies within round $\ell$.

\begin{lemma} \label{lemma:allocation_probability}
Let $U_{< \ell} \subseteq [n]$ be the set of items that arrive before round $\ell$.
Then, conditioned on this set, the sum of previous tentative allocations violates any constraint $i \in [m]$ with probability at most
\begin{equation*}
 \Pr{\left(\sum_{\ell' = \ell - \gamma N}^{\ell-1} A \hat{x}^{(\ell')}\right)_i > b_i - 1 \growingmid U_{< \ell}} \leq \frac{1}{dB} 
\end{equation*}
if $\epsilon = \sqrt{6\frac{1+\log d+\log B}{B}} \leq \frac{1}{2}$.
\end{lemma}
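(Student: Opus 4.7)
My plan is to bound the expected per-round consumption on constraint $i$ via the LP scaling, and then apply a Chernoff-style moment-generating-function argument to control the deviation. Throughout I use the WLOG rescaling $\max_j a_{i,j} \leq 1$ and $b_i \geq B$ already noted after the theorem statement.

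For the mean, fix a round $\ell' \in \{\ell - \gamma N, \ldots, \ell - 1\}$ and condition on the set $U_{\leq \ell'}$. This conditioning determines $x^{(\ell')}$, and by the random-order model the item $j_{\ell'}$ arriving in round $\ell'$ is uniformly distributed on $U_{\leq \ell'}$. Since $\hat{x}^{(\ell')}$ is nonzero only at coordinate $j_{\ell'}$ (where it equals $1$ with probability $x^{(\ell')}_{j_{\ell'}}$), this symmetry together with the LP constraint $A x^{(\ell')} \leq \tfrac{\ell'}{N}(1-\epsilon)\tfrac{b}{\gamma}$ yields
\[
\Ex{(A\hat{x}^{(\ell')})_i \growingmid U_{\leq \ell'}} = \frac{(Ax^{(\ell')})_i}{\ell'} \leq \frac{(1-\epsilon)b_i}{\gamma N}.
\]
Summed over the $\gamma N$ rounds in the window, the expected consumption conditional on $U_{<\ell}$ is at most $(1-\epsilon)b_i$.

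For the tail, write $X_{\ell'} := (A\hat{x}^{(\ell')})_i \in [0,1]$. Convexity of $x \mapsto e^{\lambda x}$ on $[0,1]$ gives $e^{\lambda X_{\ell'}} \leq 1 + (e^\lambda-1)X_{\ell'}$; combining this with the conditional-mean bound above and iterating chronologically through the natural filtration, the joint MGF factorises:
\[
\Ex{\exp\Bigl(\lambda \sum_{\ell' = \ell - \gamma N}^{\ell - 1} X_{\ell'}\Bigr) \growingmid U_{<\ell}} \leq \exp\bigl((e^\lambda - 1)(1-\epsilon) b_i\bigr).
\]
Markov's inequality at threshold $b_i - 1$ with the optimal choice $\lambda = \ln\tfrac{b_i - 1}{(1-\epsilon)b_i}$ produces the standard multiplicative Chernoff tail. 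Using $-\ln(1-\epsilon) \geq \epsilon + \epsilon^2/2$ for $\epsilon \leq 1/2$, together with $b_i \geq B$, the exponent is at least $\epsilon^2 b_i/2 - O(1)$. Plugging in $\epsilon^2 = 6(1 + \log d + \log B)/B$ gives an exponent of at least $3(1 + \log d + \log B) - O(1) \geq \log(dB) + \Omega(1)$, which comfortably yields the target bound $\frac{1}{dB}$.

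The main obstacle is the dependence structure. The LP solution $x^{(\ell')}$ is a random function of $U_{\leq \ell'}$, and through $U_{\leq \ell'}$ it inherits correlations with all earlier arrivals and rounding coin flips, so the increments $X_{\ell'}$ are genuinely dependent. The step that unblocks the Chernoff proof is the additional conditioning on $U_{\leq \ell'}$ inside each round before taking the MGF: the convexity-based one-round MGF bound holds pointwise in this richer conditioning, and by the tower property it survives the outer averaging, letting the MGF of the sum factorise exactly as it would for independent Bernoullis of the stated mean.
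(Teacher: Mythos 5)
Your overall plan (bound the conditional per-round mean by $\tfrac{(1-\epsilon)b_i}{\gamma N}$, then run a multiplicative Chernoff argument at threshold $b_i-1$ with $\epsilon^2 = 6\tfrac{1+\log d+\log B}{B}$) is the same as the paper's, and your mean bound conditioned on $U_{\leq \ell'}$ as well as the final exponent arithmetic are fine. The gap is in the one step you flag as the crux: the claim that the MGF of $\sum_{\ell'} X_{\ell'}$ factorises by ``iterating chronologically through the natural filtration'' after conditioning on $U_{\leq \ell'}$ inside each round. A chronological factorisation needs the bound $\Ex{X_{\ell'} \growingmid \mathcal{F}_{\ell'-1}} \leq \tfrac{(1-\epsilon)b_i}{\gamma N}$ where $\mathcal{F}_{\ell'-1}$ is the arrival-and-coin history of rounds $1,\ldots,\ell'-1$; but the bound you actually prove is conditioned on $\sigma(U_{\leq \ell'})$, which neither contains nor is contained in $\mathcal{F}_{\ell'-1}$ (knowing $U_{\leq \ell'}$ reveals nothing about the order of the past, while the past history does not determine which item arrives in round $\ell'$). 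The tower property cannot convert one into the other, and the past-conditioned bound is in fact false pointwise: condition on the (unlikely but admissible) history in which only dummy items arrived in rounds $1,\ldots,\ell'-1$ while $n \gg b_i/\gamma$ real items with $a_{i,j}=1$ are still pending; then whenever a real item arrives in round $\ell'$ the LP sets $x^{(\ell')}_j = 1$, so $\Ex{X_{\ell'} \growingmid \mathcal{F}_{\ell'-1}} \approx \tfrac{n}{N-\ell'}$, which can exceed $\tfrac{(1-\epsilon)b_i}{\gamma N}$ by an arbitrary factor. So the per-round MGF bound does not hold along the filtration, and the factorisation step as written breaks down.

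The paper avoids this by never factorising along the filtration: it shows that the variables $X_{\ell'} = (A\hat{x}^{(\ell')})_i$ are $1$-correlated in the sense of Panconesi and Srinivasan, i.e.\ it bounds the mixed moments $\Ex{\prod_{\ell' \in T} X_{\ell'}}$ by the corresponding products for independent Bernoulli twins with mean $\tfrac{(1-\epsilon)b_i}{\gamma N}$, following the argument of Lemma~3 in \cite{DBLP:conf/stoc/KesselheimTRV14}; Chernoff bounds only need these moment inequalities, not independence or a supermartingale structure. That correlation argument (a careful backward conditioning on the sets $U_{\leq \ell'}$ rather than on the past history) is exactly the ingredient your proposal is missing; to repair your proof you would either have to reproduce it or find another substitute for the pointwise filtration bound, since the latter is genuinely unavailable.
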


We describe the capacity used by the tentative selection through random variables $X_{\ell'} = \left(A \hat{x}^{(\ell')}\right)_i$. These random variables are not independent but $1$-correlated as defined by Panconesi and Srinivasan~\cite{Panconesi1997} and this allows us to apply a Chernoff bound to proof the lemma. Details can be found in Appendix~\ref{appendix:packing}.

\begin{proof}[Proof of Theorem~\ref{theorem:temp_packing}]
We can assume without loss of generality that $\sqrt{6\frac{1 + \log d + \log B}{B}} \leq \frac{1}{2}$ because otherwise the theorem statement follows trivially.

First, we bound the value of the tentative allocation performed in round $\ell \geq 2 \sqrt{\frac{1 + \ln d}{B}} N$ using a result in \cite{DBLP:conf/stoc/KesselheimTRV14}. Compared to \cite{DBLP:conf/stoc/KesselheimTRV14} our non-temporal relaxation is scaled down by an additional factor of $1-\epsilon$ this gives us 
\[
\Ex{v^T x^{(\ell)}} \geq  \left(1 - 9 \sqrt{\frac{1}{\frac{\ell}{N} \cdot (1-\epsilon)\frac{B}{\gamma}}} \right) \frac{\ell}{N}(1-\epsilon) \cdot \max_{x: A x \leq b / \gamma, 0 \leq x_j \leq 1 \text{ for all $j \in [m]$}} v^T x 
\]

Letting $x^\ast$ denote the optimal solution to the relaxation, we also have
\[
\max_{x: A x \leq b / \gamma, 0 \leq x_j \leq 1 \text{ for all $j \in [m]$}} v^T x \geq \frac{\frac{1}{\gamma}}{\lceil \frac{1}{\gamma} \rceil} v^T x^\ast \geq \frac{1}{1 + \gamma} v^T x^\ast \enspace.
\]
So, this implies
\[
\Ex{v^T x^{(\ell)}} \geq  \left(1 - 9 \sqrt{\frac{1}{\frac{\ell}{N} \cdot (1-\epsilon)\frac{B}{\gamma}}}\right) \frac{\ell}{N} \frac{1-\epsilon}{1 + \gamma} v^T x^\ast \enspace. 
\]
Observe that this outcome only depends on the set $U_{\leq \ell}$ but not the order in this set. Therefore, the variable that arrives in round $\ell$ can be considered being drawn uniformly from $U_{\leq \ell}$. This way, we get
\[
\Ex{v^T \hat{x}^{(\ell)} \growingmid U_{\leq \ell}} = \frac{1}{\ell} \Ex{v^T x^{(\ell)}\growingmid U_{\leq \ell}}
\]
Note that these outcomes only depend on the sets $U_{< \ell}$ and $U_{\leq \ell}$ but not the order within $U_{< \ell}$.

To bound the probability of feasibility, we will use Lemma~\ref{lemma:allocation_probability}, conditioning on the set $U_{< \ell}$ and $U_{\leq \ell}$. Let $j$ be the index of the variable arriving in round $\ell$. Taking a union bound over all $\leq d$ constraints in which variable $j$ has non-zero coefficients, we get
\begin{align*}
&\Pr{\text{it is feasible to set $\hat{x}_j = 1$} \growingmid U_{< \ell}, U_{\leq \ell}}\\
&\qquad\geq 1- d \cdot\sum_{i:a_{i, j} > 0} \Pr{\left(\sum_{\ell' = \ell - \gamma N}^{\ell-1} A \hat{x}^{(\ell')}\right)_i > b_i - 1 \growingmid U_{< \ell}} \geq \left(1-\frac{1}{B}\right)\enspace.
\end{align*}

Overall, the expected value of the allocation performed in round $\ell$ is at least
\[
\Ex{v^T \hat{x}^{(\ell)}} \geq \left(1 - 9 \sqrt{\frac{1}{\frac{\ell}{N} \cdot (1-\epsilon)\frac{B}{\gamma}}}\right) \frac{1}{N} \frac{1-\epsilon}{1 + \gamma} \left(1 - \frac{1}{B}\right)v^T x^\ast\enspace.
\]
Taking the sum of all these bounds, we get
\begin{align*}
\Ex{v^T \hat{x}} &\geq \sum_{\ell = 2 \sqrt{\frac{1 + \ln d}{B}} N}^N  \left(1 - 9 \sqrt{\frac{1}{\frac{\ell}{N} \cdot (1-\epsilon)\frac{B}{\gamma}}}\right) \frac{1}{N} \cdot \frac{1 - \epsilon}{1 + \gamma} \left(1-\frac{1}{B}\right) v^T x^\ast\\
&\geq \frac{1 - O(\epsilon)}{1 + \gamma} v^T x^\ast \enspace.
\end{align*}
\end{proof}

\section{The Temp Secretary Problem with Different Lengths}
\label{sec:diff_length}

We generalize the Temp Secretary Problem towards different item durations $\lambda_j \leq \gamma$ for all items $j \in [n]$. To define the relaxation, we use the fact that pointwise $\sum_{j \in \OPT} \lambda_j \leq B (1 + \gamma)$. This is due to the fact that an item $j$ selected at time $\tau_j=1$ will be active until time $1+\lambda_j$. Therefore, let $\OPT^\ast$ denote the optimal solution to this knapsack problem with profits $v_j$. Due to the knapsack nature of the problem, the algorithm cannot be purely comparison-based anymore. Instead, whenever an item arrives, we compute an approximate knapsack solution and tentatively select the item if it is included in this solution. It will be crucial that these solutions only slowly change when adding or removing items. This is why, there is no obvious generalization of our algorithm and analysis to general packing LPs. In more detail, for $U' \subseteq [n]$ and $\Lambda > 0$ we define $\textsc{GreedyRoundUp}(U', \Lambda)$ as the set $U'' \subseteq U'$ which we get by ordering the items in $U'$ in non-increasing order $\frac{v_j}{\lambda_j}$ and taking the minimal prefix such that $\sum_{j \in U''} \lambda_j \geq \Lambda$.

\begin{algorithm}[h]
\caption{Scaling Algorithm for different lengths and capacity $B$\label{alg:diff_length}}
\For(){every arriving item $j$}{
  Set $t := \tau_j$, $U_{\leq t} := $ items arrived so far\;
  Let $S^{(t)} := \textsc{GreedyRoundUp}(U_{\leq t}, \alpha t B)$\;
  \If(\tcp*[f]{if among best items}){$j\in S^{(t)}$}{
    \If(\tcp*[f]{and if feasible}){$S\cup\{j\}$ is a feasible schedule}{
      Set $S := S\cup\{j\}$; \tcp*[f]{then select $j$}\\
    }
  }
}
\end{algorithm}

\begin{theorem}\label{theorem:different_length}
With $\alpha = \nicefrac{1}{2}$, Algorithm~\ref{alg:diff_length} is at least $\frac{1}{4} - \Theta\left(\sqrt{\gamma}\right)$-competitive for the temp secretary problem with arbitrary durations $\lambda_j\leq \gamma$.
\end{theorem}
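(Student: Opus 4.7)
The plan is to follow the two-stage template from the cardinality case: first lower-bound the expected value of the tentative selection $S^{(t)}$, then lower-bound the probability that a tentative pick is actually feasible, and combine via a block-averaging argument as in the proof of Theorem~\ref{theorem:non_simultaneous_identical}. The parameter $\alpha = 1/2$ is chosen to maximize the resulting product $\alpha \cdot (1 - \alpha)$, which is exactly what yields the $\frac{1}{4}$ competitive ratio.

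For the tentative value, I aim to show $\Ex{\sum_{j \in S^{(t)}} v_j} \geq \frac{\alpha t}{1+\gamma} \sum_{j \in \OPT^\ast} v_j - o(1)$ for $t$ bounded away from zero. The argument adapts Lemma~\ref{lemma:tentative} to the knapsack setting. Since $\OPT^\ast$ is optimal for a knapsack with budget $B(1+\gamma)$, its fractional LP relaxation has value at least $\sum_{j \in \OPT^\ast} v_j$, and $\textsc{GreedyRoundUp}(U_{\leq t}, \alpha tB)$, by rounding up, achieves at least the fractional LP optimum on $U_{\leq t}$ with budget $\alpha tB$. In expectation each item of $\OPT^\ast$ appears in $U_{\leq t}$ with probability $t$, so scaling $\OPT^\ast \cap U_{\leq t}$ down by a factor $\approx \alpha/(1+\gamma)$ yields a feasible fractional point whose expected value is $\frac{\alpha t}{1+\gamma} \sum_{j \in \OPT^\ast} v_j$. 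A Chernoff-style concentration of $\sum_{j \in \OPT^\ast \cap U_{\leq t}} \lambda_j$ around its mean $tB(1+\gamma)$ converts this to the desired pointwise bound, via the techniques of \cite{DBLP:conf/stoc/KesselheimTRV14}.

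For the feasibility step, note first that at the moment item $j$ is considered, the current schedule $S$ can only lose items (through time-outs) over $[\tau_j, \tau_j + \lambda_j]$, so feasibility reduces to the current active count $A_\ell$ being at most $B-1$. Defining the duration-weighted tentative contribution $W_r := \lambda_{j_r} N \cdot C_r$, where $j_r$ is the item in round $r$, one has $\Ex{W_r \mid U_{\leq r}} = N \sum_{j \in S^{(r)}} \lambda_j / r \leq \alpha B + O(\gamma N/r)$ by the $\textsc{GreedyRoundUp}$ budget bound, so in any window of $\gamma N$ rounds the expected total active duration is at most $\alpha \gamma B (1+o(1))$, a fraction $\alpha$ of the capacity $B\gamma$. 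A duration-weighted variant of Lemma~\ref{lemma:average_feasibility_block}---in which the recursion for $\tilde{F}_\ell$ normalizes each past contribution by $\lambda_{j_r}/(B\gamma)$ rather than $1/B$---then yields $\Pr{F_\ell = 1} \geq 1 - \alpha - O(\sqrt{\gamma})$, which for $\alpha = 1/2$ is $\tfrac{1}{2} - O(\sqrt{\gamma})$.

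Combining the two bounds block-by-block, using the independence of $V_\ell$ and $F_\ell$ conditional on $\tilde{C}_\ell = 1$ as in the proof of Theorem~\ref{theorem:non_simultaneous_identical}, gives $\Ex{\sum_{j \in \ALG} v_j} \geq \frac{\alpha(1-\alpha)}{1+\gamma} \sum_{j \in \OPT^\ast} v_j - O(\sqrt{\gamma})$, and plugging in $\alpha = 1/2$ delivers the claim. The main obstacle is the feasibility step: the count-based recursion of Lemma~\ref{lemma:average_feasibility_block} does not extend directly because tentative selections have variable durations, and $|S^{(r)}|/r$ can be much larger than $\alpha B/(\gamma N)$ when items are short. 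One must therefore replace the count-coupled $\tilde{C}_\ell$ by a duration-coupled process whose window-averaged load is bounded by $\alpha B$, which requires some care when coupling the tentative indicator to the random length $\lambda_{j_r}$ of the item in round $r$. A secondary but routine difficulty is that $\textsc{GreedyRoundUp}$ may overshoot its budget by up to $\gamma$; this contributes only the $\sqrt{\gamma}$-order error absorbed into the final bound.
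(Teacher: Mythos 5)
Your overall architecture (tentative value of $S^{(t)}$ worth an $\alpha$-fraction of $\OPT^\ast$, feasibility probability about $1-\alpha$, product maximized at $\alpha=\nicefrac12$) matches the paper, and your first step is essentially the paper's Lemma~\ref{lemma:tentative-different-lengths}. The gap is in the feasibility step. Feasibility in round $\ell$ is a \emph{count} constraint: the number of still-active items, $\sum_{i=1}^{\gamma N} H_{i,\ell-i}$ (an item selected at round $\ell-i$ blocks capacity iff its duration is at least $i$ rounds), must be below $B$. Your proposed fix replaces the count-coupled process by a \emph{duration-weighted} one, bounding the expected total selected duration in a $\gamma$-window by $\alpha\gamma B$ and running a recursion normalized by $\lambda_{j_r}/(B\gamma)$. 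But window volume does not control occupancy: for example, $B$ items of length about $B/N\ll\gamma$ tentatively selected in the $B$ rounds just before $\ell$ give occupancy $B$ while contributing only about $B^2/N\ll B\gamma$ to the volume. So a duration-weighted analogue of Lemma~\ref{lemma:average_feasibility_block}, even if proved, would not yield $\Pr{F_\ell=1}\geq 1-\alpha-O(\sqrt{\gamma})$; the quantity it bounds is simply not the one that determines feasibility.

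What is actually needed -- and what the paper supplies -- is a way to turn the per-round budget $\sum_{j\in S^{(r)}}\lambda_j\leq \alpha\frac{r}{N}B+\gamma$ into a bound on the ``diagonal'' sum $\sum_{i=1}^{\gamma N}\Ex{H_{i,\ell-i}}$, in which the duration threshold $i$ is tied to the time offset $i$. The paper does this via an approximate-stationarity lemma (Lemma~\ref{lemma:tentative_length}): $\Pr{H_{i,\ell-i}=1}\leq\Pr{H_{i,\ell}=1}+\frac{\alpha B+1}{\ell-i}$, proved by observing $S^{(\ell-i)}\subseteq\textsc{GreedyRoundUp}(U_{\leq\ell-i},\alpha\frac{\ell}{N}B)$ and bounding the number of long items in the difference with $S^{(\ell)}$. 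Summing over $i$ at a fixed round then gives the expected tentative duration there, at most $\alpha B$ plus lower-order terms, and Markov's inequality yields $\Pr{F_\ell=1\growingmid U_{\leq\ell-1}}\geq 1-\alpha-\frac{3\gamma N}{\ell-\gamma N}$ for every round. Note that once this per-round bound is available, no coupling to independent $\tilde{C}_\ell$ variables and no block-averaging recursion are needed at all; your proposal is missing exactly this time-shift comparison, and without it the feasibility estimate does not go through.
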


We use the same proof structure like in Theorem~\ref{theorem:non_simultaneous_identical}. Here Lemma~\ref{lemma:tentative-different-lengths} gives a bound on the expected value of the tentative solution. The proof of this lemma is very similar to the proof of Lemma~\ref{lemma:tentative} and can be found in Appendix~\ref{appendix:different_lengths_smallb}.

\begin{lemma}\label{lemma:tentative-different-lengths}
For $\ell \geq 2 \sqrt{\frac{\gamma}{B}} N$
\[
\Ex{\sum_{j \in S^{(\ell)}} v_j} \geq \alpha \left(1 - 9 \sqrt{\frac{1}{\frac{\ell}{N} \frac{B}{\gamma}}} \right) \frac{\ell}{N} \frac{1}{1 + \gamma} \sum_{j \in \OPT^\ast} v_j\enspace.
\]
\end{lemma}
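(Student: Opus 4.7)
The strategy parallels the proof of Lemma~\ref{lemma:tentative}: I will relate the greedy round-up selection to a one-constraint (knapsack) LP on the observed prefix and then invoke the random-order scaling-LP guarantee from \cite{DBLP:conf/stoc/KesselheimTRV14} that was already used for the identical-length case.

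First, I replace $\OPT^\ast$ by a knapsack LP relaxation. Consider the LP $\max v^T x$ subject to $\sum_j \lambda_j x_j \leq B$ and $0 \leq x_j \leq 1$; let $\OPT_{LP}$ denote its optimum. Since pointwise $\sum_{j \in \OPT^\ast} \lambda_j \leq B(1+\gamma)$, the scaled indicator $\frac{1}{1+\gamma} \mathbf{1}_{\OPT^\ast}$ is feasible for this LP, so $\OPT_{LP} \geq \frac{1}{1+\gamma} \sum_{j \in \OPT^\ast} v_j$. This is where the $\frac{1}{1+\gamma}$ factor in the lemma will come from.

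Second, I relate $S^{(\ell)}$ to an LP on the prefix. For any $U' \subseteq [n]$ and capacity $\Lambda$, the LP $\max v^T x$ over $U'$ with $\sum_j \lambda_j x_j \leq \Lambda$, $0 \leq x_j \leq 1$, is solved by sorting in decreasing order of $v_j/\lambda_j$ and adding items until the capacity is exhausted (keeping the last item fractionally). $\textsc{GreedyRoundUp}(U', \Lambda)$ performs exactly the same sort-and-add procedure but keeps the last item \emph{entirely}, so the integer selection it returns has value at least that of the fractional LP. Applied with $U' = U_{\leq \ell}$ and $\Lambda = \alpha \frac{\ell}{N} B$, this gives $\sum_{j \in S^{(\ell)}} v_j \geq L^{(\ell)}$, where $L^{(\ell)}$ is the LP optimum on the prefix with scaled capacity $\alpha \frac{\ell}{N} B$.

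Third, I invoke the scaling-LP bound. Since arrival times are i.i.d.\ uniform, $U_{\leq \ell}$ is (up to time-discretization) a uniformly random subset of $[n]$ of expected size $\ell$, and the LP that the algorithm implicitly solves is the prefix-restricted single-constraint LP whose capacity $\alpha \frac{\ell}{N} B$ is exactly a $\alpha \frac{\ell}{N}$-fraction of the global capacity $B$. This is the setting of the random-order scaling algorithm of \cite{DBLP:conf/stoc/KesselheimTRV14}, applied to a single constraint with capacity ratio at least $B/\max_j \lambda_j \geq B/\gamma$. That result yields
\[
\Ex{L^{(\ell)}} \;\geq\; \left(1 - 9 \sqrt{\frac{1}{\tfrac{\ell}{N} \cdot \tfrac{B}{\gamma}}}\right) \alpha \, \frac{\ell}{N} \, \OPT_{LP},
\]
which combined with Step~1 gives the lemma. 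The threshold $\ell \geq 2 \sqrt{\gamma/B}\, N$ is precisely what makes the error term less than one, identically to Lemma~\ref{lemma:tentative}.

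The main obstacle is bookkeeping in Step~3: the scaling factor in \cite{DBLP:conf/stoc/KesselheimTRV14} is $\ell/N$, whereas here it is $\alpha \ell/N$, and strictly speaking the error term should involve $\alpha \ell/N \cdot B/\gamma$ inside the square root. For $\alpha=\nicefrac{1}{2}$ this worsens the constant from $9$ to $9\sqrt{2}$, which is absorbed into the constant factor hidden in the $\Theta(\sqrt{\gamma})$ term of Theorem~\ref{theorem:different_length} (or equivalently, one can re-derive the scaling bound tracking the $\alpha$ explicitly, since the \cite{DBLP:conf/stoc/KesselheimTRV14} proof uses only Chernoff-style concentration over the random prefix, whose variance is governed by the true scaled capacity). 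Apart from this constant, the proof is a direct transcription of the appendix proof of Lemma~\ref{lemma:tentative} with the cardinality constraint replaced by the knapsack constraint on $\lambda$.
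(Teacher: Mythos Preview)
Your outline matches the paper's proof: reduce to a single-constraint (knapsack) LP, use that $\textsc{GreedyRoundUp}$ dominates the fractional optimum, and invoke Lemma~\ref{lemma:STOC} with $b_{\min}=B/\gamma$ and the feasible point $x^\ast_j=\frac{1}{1+\gamma}\mathbf{1}_{j\in\OPT^\ast}$.

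The one place where you diverge is exactly the ``obstacle'' you flag. The paper does \emph{not} apply Lemma~\ref{lemma:STOC} with scaling factor $\alpha\ell/N$. Instead it applies the lemma with the full scaling $\ell/N$ (capacity $\frac{\ell}{N}B$), obtaining the error term $9\sqrt{\gamma N/(\ell B)}$ without any $\alpha$ inside the square root, and only afterwards brings in $\alpha$ via the inequality
\[
\sum_{j\in S^{(\ell)}} v_j \;\geq\; \text{LP}\Bigl(\alpha\tfrac{\ell}{N}B,\,U_{\leq\ell}\Bigr)\;\geq\; \alpha\cdot\text{LP}\Bigl(\tfrac{\ell}{N}B,\,U_{\leq\ell}\Bigr)\;=\;\alpha\max_{x\in\mathcal{P}(\ell/N,R)} v^Tx,
\]
which holds because the fractional-knapsack optimum is concave in the capacity (scaling any feasible $x$ by $\alpha$ is feasible for the $\alpha$-scaled capacity). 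This gives the lemma exactly as stated, with the constant $9$, and no need to re-derive the \cite{DBLP:conf/stoc/KesselheimTRV14} bound or absorb a $\sqrt{2}$ into Theorem~\ref{theorem:different_length}. Your Step~2 already contains the first inequality; adding the concavity step resolves your obstacle cleanly.
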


The main difference to previous proofs in this paper is the fact that we cannot bound the probability of a selection for a fixed round because it depends on the lengths of the items that arrive up to this round. Fortunately, we can bound the probability that a selection would be feasible by the following lemma.

\begin{lemma}\label{lemma:different_lengths:feasibility}
Conditioning on any set of arrivals in rounds $1, \ldots, \ell - 1$, the probability that an item can be feasibly selected in round $\ell$ is at least $1 - \alpha - \frac{3\gamma N}{\ell - \gamma N}$.
\end{lemma}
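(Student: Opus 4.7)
The plan is to reduce feasibility to a pointwise load condition at time $t := \ell/N$ and apply Markov's inequality. Since every item has duration at most $\gamma$ and the algorithm has committed to no item arriving after round $\ell$, during the round-$\ell$ item's lifespan the load can only decrease from its value at $t$; hence feasibility is equivalent to $\mathrm{load}(t) \le B - 1$, and it suffices to bound $\Ex{\mathrm{load}(t) \growingmid U_{\leq \ell - 1}} \le \left(\alpha + \tfrac{3\gamma N}{\ell - \gamma N}\right) B$ and then apply Markov to the event $\mathrm{load}(t) \ge B$.

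To bound the expected load I would pass through the tentative selection: since actually selected items are a subset of tentatively selected ones, $\mathrm{load}(t)$ is at most the number of tentatively selected items with arrival time in $[\ell - \gamma N, \ell - 1]$ whose duration covers $t$. For each round $\ell'$ in this window, conditional on $U_{\leq \ell'}$ the arriving item is uniform on $U_{\leq \ell'}$, so the expected length of the tentative contribution equals $T^{(\ell')}/\ell'$ with $T^{(\ell')} = \sum_{j \in S^{(\ell')}} \lambda_j \le \alpha(\ell'/N) B + \gamma$ by the minimality of \textsc{GreedyRoundUp}. Summing over $\ell' \in [\ell - \gamma N, \ell - 1]$ and using $\sum_{\ell'} 1/\ell' \le \gamma N/(\ell - \gamma N)$ gives expected total tentative length at most $\alpha B\gamma + \gamma^2 N/(\ell - \gamma N)$ in the window.

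The main step is converting this length bound into a count bound on items still active at $t$. A round-$\ell'$ tentative selection contributes to $\mathrm{load}(t)$ only if its length exceeds $(\ell - \ell')/N$; for each $\ell'$ I would bound the number of qualifying items in $S^{(\ell')}$ by the minimum of $|S^{(\ell')}|$ and the length-threshold estimate $T^{(\ell')} N/(\ell - \ell')$, then sum over $\ell'$ so that the main term assembles into $\alpha B$ (from the ``desired'' prefix of the greedy ordering whose cumulative length fills the budget $\alpha\ell B/N$) while the residual collects into a correction of size $\tfrac{3\gamma N}{\ell - \gamma N}\cdot B$. The delicate point, and the main obstacle, is that a naive per-round application of the length-threshold bound produces a spurious $\log(\gamma N)$ factor; the summation must be split according to whether $\ell - \ell'$ is small enough that essentially all of $S^{(\ell')}$ might contribute (trivial count dominates) or large enough that the length budget dominates, and only then does the residual telescope cleanly into the stated error term. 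Once $\Ex{\mathrm{load}(t)} \le (\alpha + \tfrac{3\gamma N}{\ell - \gamma N}) B$ is established, Markov's inequality immediately yields the claimed feasibility lower bound.
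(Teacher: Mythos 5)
Your opening reduction is sound and matches the paper's skeleton: feasibility at round $\ell$ is a load condition at the arrival time, actual selections are dominated by tentative ones, the item arriving in round $\ell'$ is uniform on $U_{\leq \ell'}$, and the greedy budget gives $\sum_{j \in S^{(\ell')}} \lambda_j \leq \alpha \frac{\ell'}{N} B + \gamma$; the bound $1-\alpha-\dots$ is then exactly a Markov-type estimate $1 - \frac{1}{B}\Ex{\mathrm{load}}$, as in the paper. The gap is in the step you yourself flag as the main obstacle: converting the length information into a bound of $(\alpha + o(1))B$ on the expected number of tentative selections still active at round $\ell$. Your plan retains, for each past round $\ell'$, only the two aggregates $\lvert S^{(\ell')}\rvert$ and $T^{(\ell')}=\sum_{j\in S^{(\ell')}}\lambda_j$, and bounds the qualifying items at lag $i=\ell-\ell'$ by $\min\bigl(\lvert S^{(\ell')}\rvert,\ T^{(\ell')}N/i\bigr)$. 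Summing this over lags cannot give $\alpha B$ plus a small error: splitting at the crossover lag $i_0 \approx T^{(\ell')}N/\lvert S^{(\ell')}\rvert$ yields roughly $\alpha B\bigl(1+\ln(\gamma N \lvert S\rvert/(TN))\bigr)$, and if the duration budget is shared between items of length $\approx\gamma$ and items of length $\approx 1/N$ (so $\lvert S\rvert \approx TN$), this is $\Theta(\alpha B\log(\gamma N))$ even though the true load is fine. So the claimed ``assembly into $\alpha B$'' and ``clean telescoping of the residual'' are not substantiated by the ingredients you allow yourself; no split of the lag range can rescue a bound that only sees $\lvert S^{(\ell')}\rvert$ and $T^{(\ell')}$.

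What makes the true bound work is a cancellation across lags: $\sum_{i}\lvert\{j : \lambda_j \geq i/N\}\rvert = \sum_j \lambda_j N$ (layer cake), but this identity needs all lags to refer to (essentially) the same set, whereas your per-lag worst cases are realized by different duration profiles. The paper supplies exactly this missing ingredient in Lemma~\ref{lemma:tentative_length}: by comparing $S^{(\ell-i)}$ with $S^{(\ell)}$ through the superset $\tilde{S}=\textsc{GreedyRoundUp}(U_{\leq \ell-i},\alpha\frac{\ell}{N}B)$, it shows the probability that an item of duration at least $i$ rounds is tentatively selected is nearly round-invariant, up to an additive $\frac{\alpha B+1}{\ell-i}$. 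With that stationarity, the lag sum collapses to the expected tentative length in the single round $\ell-1$, which the budget bound controls directly, giving $\alpha + O\bigl(\frac{\gamma N}{\ell-\gamma N}\bigr)$ after dividing by $B$. Your proof needs either this shift/coupling lemma or some other argument relating the composition of $S^{(\ell-i)}$ across $i$; as written, the central step fails.
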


As a key idea, we show that the probability of a selection does not significantly change between rounds.

\begin{lemma}\label{lemma:tentative_length}
Let $H_{i, \ell}$ be the 0/1 indicator if an item with duration at least $i$ rounds is tentatively selected in rounds $\ell$. It holds that $\Pr{H_{i, \ell-i} = 1} \leq \Pr{H_{i, \ell} = 1} + \frac{\alpha B + 1}{\ell - i}$.
\end{lemma}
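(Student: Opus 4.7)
The plan is to rewrite both probabilities as $1/r$ times expectations of sizes of the greedy set $S^{(r)}$, and then to compare $S^{(\ell-i)}$ to a naturally coupled restriction of $S^{(\ell)}$.

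First, I would couple the two rounds by conditioning on the set $U_{\leq \ell}$ only, noting that the permutation of its elements into rounds $1,\dots,\ell$ is uniformly random, so that $U_{\leq \ell-i}$ is just the first $\ell-i$ items of that permutation. Writing $|S|_i := |\{j \in S : \lambda_j N \geq i\}|$ for the number of elements of $S$ of duration at least $i$ rounds, and using that the item arriving in round $r$ is (conditionally on $U_{\leq r}$) uniform over $U_{\leq r}$, one obtains
\[
\Pr{H_{i,\ell}=1} = \frac{\Ex{|S^{(\ell)}|_i}}{\ell}, \qquad \Pr{H_{i,\ell-i}=1} = \frac{\Ex{|S^{(\ell-i)}|_i}}{\ell-i}.
\]
Thus the lemma reduces to the expectation inequality $\Ex{|S^{(\ell-i)}|_i} \leq \tfrac{\ell-i}{\ell}\Ex{|S^{(\ell)}|_i} + (\alpha B + 1)$.

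Second, I would introduce the restricted prefix $T^\star := S^{(\ell)} \cap U_{\leq \ell-i}$. Since the $v/\lambda$-order on $U_{\leq \ell-i}$ is the restriction of the $v/\lambda$-order on $U_{\leq \ell}$ and $S^{(\ell)}$ is a prefix of the latter, $T^\star$ is itself a $v/\lambda$-prefix of $U_{\leq \ell-i}$, and so is the algorithm's $S^{(\ell-i)}$; consequently, one of the two is contained in the other. By symmetry of the sub-sampling, each item of $S^{(\ell)}$ lies in $U_{\leq \ell-i}$ with marginal probability $(\ell-i)/\ell$, so $\Ex{|T^\star|_i \mid U_{\leq \ell}} = \tfrac{\ell-i}{\ell}|S^{(\ell)}|_i$. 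It therefore suffices to bound the ``extras'' $E := S^{(\ell-i)} \setminus T^\star$ in expectation by $\alpha B + 1$.

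Third, I would analyze $E$: it is non-empty only in the shortfall case $\mathrm{length}(T^\star) < \alpha(\ell-i)B/N$, and then $E$ is the contiguous piece of the $v/\lambda$-ordering of $U_{\leq \ell-i}$ just after $T^\star$, whose total $\lambda$-length equals the shortfall plus at most $\gamma$ for the overshoot of the last greedy item. The overshooting item contributes at most one to $|E|_i$; the remaining items counted by $|E|_i$ have $\lambda_j \geq i/N$, so their count is at most (shortfall)$\cdot N/i$. Using
\[
\Ex{\mathrm{length}(T^\star) \mid U_{\leq \ell}} = \tfrac{\ell-i}{\ell}\,\mathrm{length}(S^{(\ell)}) \geq \alpha(\ell-i)B/N,
\]
the expected shortfall is controlled by the threshold gap $\alpha i B/N$, and dividing by $i/N$ converts this into at most $\alpha B$ items, yielding $\Ex{|E|_i} \leq \alpha B + 1$; combined with the symmetry identity for $T^\star$ and division by $\ell-i$, this gives the stated bound.

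The main obstacle is the extras bound in the third step: pointwise, a bad realization of the sub-sampling can push $\mathrm{length}(T^\star)$ well below the $(\ell-i)$-threshold and make $|E|_i$ far exceed $\alpha B + 1$, so the argument crucially needs the expectation-level cancellation from the uniform sub-sampling of $U_{\leq \ell-i}$ rather than any worst-case comparison. The rest of the proof is a careful conversion between $\lambda$-length of the extras and their count via the duration threshold $i/N$, together with routine bookkeeping for the overshoot.
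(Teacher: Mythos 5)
Your steps 1--2 match the paper's setup (reduce to expected counts via the uniform permutation, and compare greedy prefixes on $U_{\leq \ell-i}$), but step 3 has a genuine gap: the bound on the ``extras'' $E = S^{(\ell-i)} \setminus T^\star$ does not follow from the expectation identity you invoke. You need $\Ex{\max\{0,\; \alpha\frac{\ell-i}{N}B - \mathrm{length}(T^\star)\}} \leq \alpha \frac{i}{N} B$, but knowing only $\Ex{\mathrm{length}(T^\star) \growingmid U_{\leq \ell}} = \frac{\ell-i}{\ell}\,\mathrm{length}(S^{(\ell)}) \geq \alpha\frac{\ell-i}{N}B$ gives no control of the positive part: $\Ex{\max\{0,X\}}$ is not bounded by $\max\{0,\Ex{X}\}$. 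In fact, when $\mathrm{length}(S^{(\ell)})$ is close to $\alpha\frac{\ell}{N}B$ the conditional mean of $\mathrm{length}(T^\star)$ sits essentially \emph{at} the threshold $\alpha\frac{\ell-i}{N}B$ (there is no slack, so no ``threshold gap'' materializes in expectation), and the expected deficit is of the order of the standard deviation of the subsampled length --- with identical durations $\gamma$ this is about $\sqrt{\gamma \alpha B i / N}$, which exceeds $\alpha \frac{i}{N}B$ whenever $i \ll \gamma N/(\alpha B)$; after multiplying by $N/i$ your count bound then grows with $N$ and becomes vacuous for small $i$. So the step ``the expected shortfall is controlled by the threshold gap'' is not just unproven, it is false in the relevant regime, and your own ``main obstacle'' remark correctly identifies the danger but the proposed expectation-level cancellation does not resolve it.

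The paper avoids exactly this positive-part problem by a small but crucial change of comparison set: instead of $T^\star$ with the $(\ell-i)$-threshold, it takes $\tilde{S} := \textsc{GreedyRoundUp}(U_{\leq \ell-i}, \alpha\frac{\ell}{N}B)$, i.e.\ the greedy prefix of the \emph{early} arrivals with the \emph{larger} round-$\ell$ threshold. Pointwise $\tilde{S} \supseteq S^{(\ell-i)}$ and $\tilde{S} \supseteq S^{(\ell)} \cap U_{\leq \ell-i} = T^\star$, and pointwise $\mathrm{length}(\tilde{S}) \leq \alpha\frac{\ell}{N}B + \gamma$. Hence $\mathrm{length}(\tilde{S}\setminus S^{(\ell)}) \leq \alpha\frac{\ell}{N}B + \gamma - \mathrm{length}(T^\star)$ is a plain difference (no truncation), and the identity you already have gives $\Ex{\mathrm{length}(\tilde{S}\setminus S^{(\ell)})} \leq \alpha\frac{i}{N}B + \gamma$ by linearity, whence the number of items of duration at least $i$ rounds in $\tilde{S}\setminus S^{(\ell)}$ has expectation at most $\alpha B + 1$ (the ``$+1$'' absorbing the overshooting item); the degenerate case $S^{(\ell)} = U_{\leq\ell}$ is handled separately since then $\tilde{S} \subseteq S^{(\ell)}$. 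Since your $E \subseteq \tilde{S}\setminus S^{(\ell)}$, your architecture can be repaired by substituting this comparison set, which is exactly the paper's argument; as written, however, the proposal does not prove the lemma.
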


\begin{proof}
First, we observe that an item $j$ is tentatively selected in round $\ell'$ if it is contained in $S^{(\ell')}$ and arrives in round $\ell'$. As the set $S^{(\ell')}$ only depends on the items arriving in rounds $1, \ldots, \ell'$ but not on their order, the probability of $j$ being tentatively selected is exactly $\frac{1}{\ell'} \Pr{ j \in S^{(\ell')}}$. Therefore, to prove this lemma, we will compare the sets $S^{(\ell - i)}$ and $S^{(\ell)}$.

Instead of making statements about the set $S^{(\ell - i)}$ directly, we instead use a set $\tilde{S}$ defined as $\tilde{S} := $\textsc{GreedyRoundUp}$(U_{\leq \ell - i}, \alpha \frac{\ell}{N} B)$. Note that by definition of \textsc{GreedyRoundUp}, this set $\tilde{S}$ is a superset of $S^{(\ell - i)} =$\textsc{GreedyRoundUp}$(U_{\leq \ell - i}, \alpha \frac{\ell - i}{N} B)$.

Let now $\mathcal{E}^*_{i, \ell-i}$ be the event that an item of length at least $i$ rounds arrives in round $\ell - i$ that is contained in $\tilde{S} \cap S^{(\ell)}$, and let $\mathcal{E}^-_{i, \ell-i}$ be the event that an item of length at least $i$ rounds arrives in round $\ell - i$ that is contained in $\tilde{S} \setminus S^{(\ell)}$. As $S^{(\ell - i)}$ is contained in $\tilde{S}$, we have $\Pr{H_{i, \ell - i}  = 1} \leq \Pr{\mathcal{E}^*_{i, \ell-i}} + \Pr{\mathcal{E}^-_{i, \ell-i}}$.

To bound the first probability, we use that every item in $S^{(\ell)}$ has already arrived in round $\ell-i$ with probability $\frac{\ell-i}{\ell}$. Every such item arrives exactly in round $\ell-i$ with probability $\frac{1}{\ell-i}$. Therefore we have $\Pr{\mathcal{E}^*_{i, \ell-i}} = \Pr{H_{i, \ell} = 1}$.

For the second probability, observe that $\tilde{S}$ contains all elements of $S^{(\ell)}$ that have arrived by round $\ell - i$, i.e., $\tilde{S} \supseteq S^{(\ell)} \cap U_{\leq \ell - i}$. Also if $\sum_{j \in S^{(\ell)}} \lambda_j < \alpha \frac{\ell}{N} B - \gamma$, it has to be $S^{(\ell)} = U_{\leq \ell}$. Therefore, $\tilde{S} = S^{(\ell)}$ and nothing has to be shown. This implies that $\Ex{\sum_{j\in\tilde{S}\cap S^{(\ell)}} \lambda_j} \geq \frac{\ell-i}{\ell} \alpha \frac{\ell}{N}B = \alpha \frac{\ell - i}{N} B$ because the probability for each item in $S^{(\ell)}$ to arrive until round $\ell-i$ is $\frac{\ell-i}{\ell}$. Therefore, the $\lambda_j$ values for all but one item in $\tilde{S}\setminus S^{(\ell)}$ add up to at most $\alpha B \frac{i}{N}$.

Let $K$ denote the number of items in $\tilde{S}\setminus S^{(\ell)}$ of length at least $i$ blocks. On the one hand, $K$ is bounded by $\Ex{K} \leq \alpha B + 1$ due to the above considerations. On the other hand, conditioning on $K$, we can bound the probability of $\Pr{\mathcal{E}^-_{i, \ell-i}}$ by $\Pr{\mathcal{E}^-_{i, \ell-i} \growingmid K = k} \leq \frac{k}{\ell - i}$.

Taking the expectation over $K$, we get $\Pr{\mathcal{E}^-_{i, \ell-i}} \leq \frac{\alpha B + 1}{\ell - i}$.

\end{proof}

\begin{proof}[Proof of Lemma~\ref{lemma:different_lengths:feasibility}]
Using Lemma~\ref{lemma:tentative_length}, we now have
\begin{align*}
\Pr{\text{round $\ell$ would be feasible} \growingmid U_{\leq \ell - 1}}
&\geq 1 - \frac{1}{B} \sum_{i = 1}^{\gamma N} \Ex{H_{i, \ell-i} \growingmid U_{\leq \ell - 1}}\\
&\geq 1 - \frac{1}{B} \sum_{i = 1}^{\gamma N} \Ex{H_{i, \ell - 1} \growingmid U_{\leq \ell - 1}} - \frac{\alpha \gamma N + 1}{\ell - \gamma N} \enspace.
\end{align*}

Observe that $\sum_{i = 1}^{\gamma N} \Ex{H_{i, \ell - 1} \growingmid U_{\leq \ell - 1}}$ is exactly the length in rounds of the tentative selection in round $\ell - 1$, counting no tentative selection as $0$. This length is bounded by $\frac{1}{\ell - 1} \Ex{\sum_{j \in S^{(\ell - 1)}} \lambda_j N \growingmid U_{\leq \ell - 1}}$. This sum can be bounded pointwise by $\sum_{j \in S^{(\ell - 1)}} \lambda_j \leq \alpha \frac{\ell - 1}{N} B + \gamma$. So we have $\sum_{i = 1}^{\gamma N} \Ex{H_{i, \ell - 1} \growingmid U_{\leq \ell - 1}} \leq \alpha + \frac{\gamma N}{\ell - 1}$.

This means $\Pr{\text{round $\ell$ would be feasible} \growingmid U_{\leq \ell - 1}} \geq 1 - \alpha - \frac{(1+\alpha) \gamma N +1}{\ell - \gamma N} \geq 1 - \alpha - \frac{3\gamma N}{\ell-\gamma N}$.
\end{proof}

The remaining proof follows essentially the pattern of the previous sections. 

\begin{proof}[Proof of Theorem~\ref{theorem:different_length}]
We have already shown that
\[
\Pr{\text{round $\ell$ would be feasible} \growingmid U_{\leq \ell - 1}} \geq 1 - \alpha - \frac{3 \gamma N}{\ell - \gamma N} \enspace.
\]
By combining this bound with Lemma~\ref{lemma:tentative-different-lengths}, we get that the expected value gained from round $\ell \geq 2 \sqrt{\gamma} N$ is at least
\[
\frac{1}{2\ell}\cdot\left(1 - 9 \sqrt{\frac{\gamma}{\frac{\ell}{N}B}} \right)\frac{\ell}{N}\frac{\sum_{j \in \OPT^\ast} v_j }{1 + \gamma} \cdot\left(\frac{1}{2} - \frac{3 \gamma N}{\ell - \gamma N} \right)\enspace.
\]
Similar calculations like in the proof of Theorem~\ref{theorem:non_simultaneous_identical} and $\alpha = \frac{1}{2}$ give a competitive ratio of
\begin{align*}
\Ex{\sum_{j \in \ALG} v_j} &\geq \sum_{k=2}^{\frac{1}{\sqrt{\gamma}}-1}\sum_{\ell=k\sqrt{\gamma}N}^{(k+1)\sqrt{\gamma}N-1}\frac{1}{2}\left(1 - 9 \sqrt{\frac{\gamma}{\frac{\ell}{N} B}} \right)\frac{1}{N}\frac{\sum_{j \in \OPT^\ast} v_j}{1 + \gamma} \cdot\left(\frac{1}{2} - \frac{3 \gamma N}{\ell - \gamma N} \right) \\
&\geq \left(\frac{1}{4} - 5\sqrt{\gamma} - \frac{3}{2}\gamma\ln\left(\frac{1}{\sqrt{\gamma}}\right)\right)\cdot\frac{1}{1 + \gamma} \sum_{j \in \OPT^\ast} v_j\enspace.
\end{align*}

The missing details can be found in Appendix~\ref{appendix:different_lengths_smallb}.
\end{proof}

\section{Future Work}

The area of online problems with stochastic arrivals and temporal constraints leaves many open research directions. First of all, only very few impossibility results are known in this model. It seems natural that bounds in the related random order model should generalize in some way. For example in the temp secretary problem with large capacities, it seems plausible that the competitive ratio cannot be better than $1 - \Omega\left(\sqrt{\frac{1}{B}}\right)$ like in \cite{DBLP:conf/soda/Kleinberg05}, independent of $\gamma$. A small $\gamma$ increases the overall capacity, but in every step the algorithm is still tightly restricted by the timed capacity $B$.

Furthermore, the results in this paper apply if the arrival times are each drawn independently uniformly from $[0, 1]$. This condition can be relaxed in several ways. Firstly, other distributions than the uniform one are of important interest. Although the algorithms in this paper do admit a reasonable generalization using quantiles, our analyses as they are do not extend. Secondly, it is also very interesting to weaken the assumption that arrival times are independent and identically distributed. There is only little work in related models \cite{DBLP:conf/stoc/KesselheimKN15, DBLP:conf/sigecom/EsfandiariKM15} and none for this particular setting.

Other fruitful directions could be the extension to other feasibility structures, such as (special classes of) matroid, and to other objective functions, such as submodular ones. Finally, it might also be interesting to let the algorithm decide the contract starting/finishing dates or its duration.

\bibliographystyle{plain}

\bibliography{main_document}

\begin{thebibliography}{10}

\bibitem{DBLP:conf/soda/AgrawalD15}
Shipra Agrawal and Nikhil~R. Devanur.
\newblock Fast algorithms for online stochastic convex programming.
\newblock In {\em Proc.\ 26th Symp.\ Discr.\ Algorithms (SODA)}, pages
  1405--1424, 2015.

\bibitem{DBLP:journals/ior/AgrawalWY14}
Shipra Agrawal, Zizhuo Wang, and Yinyu Ye.
\newblock A dynamic near-optimal algorithm for online linear programming.
\newblock {\em Operations Research}, 62(4):876--890, 2014.

\bibitem{DBLP:conf/approx/BabaioffIKK07}
Moshe Babaioff, Nicole Immorlica, David Kempe, and Robert Kleinberg.
\newblock A knapsack secretary problem with applications.
\newblock In {\em Proc.\ 10thIntl.\ Workshop Approximation Algorithms for
  Combinatorial Optimization Problems (APPROX)}, pages 16--28, 2007.

\bibitem{DBLP:conf/soda/BabaioffIK07}
Moshe Babaioff, Nicole Immorlica, and Robert Kleinberg.
\newblock Matroids, secretary problems, and online mechanisms.
\newblock In {\em Proc.\ 18th Symp.\ Discr.\ Algorithms (SODA)}, pages
  434--443, 2007.

\bibitem{DBLP:journals/talg/BateniHZ13}
MohammadHossein Bateni, Mohammad~Taghi Hajiaghayi, and Morteza Zadimoghaddam.
\newblock Submodular secretary problem and extensions.
\newblock {\em ACM Trans.\ Algorithms}, 9(4):32, 2013.

\bibitem{DBLP:conf/sigecom/DevanurJSW11}
Nikhil~R. Devanur, Kamal Jain, Balasubramanian Sivan, and Christopher~A.
  Wilkens.
\newblock Near optimal online algorithms and fast approximation algorithms for
  resource allocation problems.
\newblock In {\em Proc.\ 12thConf.\ Econom.\ Comput.\ (EC)}, pages 29--38,
  2011.

\bibitem{dynkin1963optimum}
Eugene~B. Dynkin.
\newblock The optimum choice of the instant for stopping a markov process.
\newblock In {\em Sov. Math. Dokl}, volume~4, pages 627--629, 1963.

\bibitem{DBLP:conf/sigecom/EsfandiariKM15}
Hossein Esfandiari, Nitish Korula, and Vahab~S. Mirrokni.
\newblock Online allocation with traffic spikes: Mixing adversarial and
  stochastic models.
\newblock In {\em Proceedings of the Sixteenth {ACM} Conference on Economics
  and Computation, {EC} '15, Portland, OR, USA, June 15-19, 2015}, pages
  169--186, 2015.

\bibitem{DBLP:conf/approx/FeldmanNS11}
Moran Feldman, Joseph Naor, and Roy Schwartz.
\newblock Improved competitive ratios for submodular secretary problems
  (extended abstract).
\newblock In {\em Proc.\ 14thIntl.\ Workshop Approximation Algorithms for
  Combinatorial Optimization Problems (APPROX)}, pages 218--229, 2011.

\bibitem{DBLP:conf/soda/FeldmanSZ15}
Moran Feldman, Ola Svensson, and Rico Zenklusen.
\newblock A simple \emph{O}(log log(rank))-competitive algorithm for the
  matroid secretary problem.
\newblock In {\em Proc.\ 26th Symp.\ Discr.\ Algorithms (SODA)}, pages
  1189--1201, 2015.

\bibitem{DBLP:conf/esa/FiatGKN15}
Amos Fiat, Ilia Gorelik, Haim Kaplan, and Slava Novgorodov.
\newblock The temp secretary problem.
\newblock In {\em Proc.\ 23rdEuropean Symp.\ Algorithms (ESA)}, pages 631--642,
  2015.

\bibitem{Gardner:1960}
M.~Gardner.
\newblock Scientific american, 1960.

\bibitem{DBLP:conf/esa/GuptaM14}
Anupam Gupta and Marco Molinaro.
\newblock How experts can solve {LPs} online.
\newblock In {\em Proc.\ 22ndEuropean Symp.\ Algorithms (ESA)}, pages 517--529,
  2014.

\bibitem{Coffman1998}
Edward G.~Coffman Jr., Philippe Flajolet, Leopold Flatto, and Micha Hofri.
\newblock The maximum of a random walk and its application to rectangle
  packing.
\newblock {\em Probability in the Engineering and Informational Sciences},
  (3):373--386, 1998.

\bibitem{DBLP:conf/stoc/KesselheimKN15}
Thomas Kesselheim, Robert~D. Kleinberg, and Rad Niazadeh.
\newblock Secretary problems with non-uniform arrival order.
\newblock In {\em Proceedings of the Forty-Seventh Annual {ACM} on Symposium on
  Theory of Computing, {STOC} 2015, Portland, OR, USA, June 14-17, 2015}, pages
  879--888, 2015.

\bibitem{DBLP:conf/esa/KesselheimRTV13}
Thomas Kesselheim, Klaus Radke, Andreas T{\"{o}}nnis, and Berthold
  V{\"{o}}cking.
\newblock An optimal online algorithm for weighted bipartite matching and
  extensions to combinatorial auctions.
\newblock In {\em Proc.\ 21st European Symp.\ Algorithms (ESA)}, pages
  589--600, 2013.

\bibitem{DBLP:conf/stoc/KesselheimTRV14}
Thomas Kesselheim, Klaus Radke, Andreas T{\"{o}}nnis, and Berthold
  V{\"{o}}cking.
\newblock Primal beats dual on online packing lps in the random-order model.
\newblock In {\em Proc.\ 46th Symp.\ Theory of Computing (STOC)}, pages
  303--312, 2014.

\bibitem{DBLP:conf/soda/Kleinberg05}
Robert~D. Kleinberg.
\newblock A multiple-choice secretary algorithm with applications to online
  auctions.
\newblock In {\em Proc.\ 16th Symp.\ Discr.\ Algorithms (SODA)}, pages
  630--631, 2005.

\bibitem{DBLP:conf/icalp/KorulaP09}
Nitish Korula and Martin P{\'{a}}l.
\newblock Algorithms for secretary problems on graphs and hypergraphs.
\newblock In {\em Proc.\ 36th Intl.\ Coll.\ Autom.\ Lang.\ Program.\ (ICALP)},
  pages 508--520, 2009.

\bibitem{DBLP:conf/focs/Lachish14}
Oded Lachish.
\newblock O(log log rank) competitive ratio for the matroid secretary problem.
\newblock In {\em Proc.\ 55th Symp.\ Foundations of Computer Science (FOCS)},
  pages 326--335, 2014.

\bibitem{lindley1961dynamic}
Denis~V Lindley.
\newblock Dynamic programming and decision theory.
\newblock {\em Applied Statistics}, pages 39--51, 1961.

\bibitem{DBLP:conf/soda/LiptonT94}
Richard~J. Lipton and Andrew Tomkins.
\newblock Online interval scheduling.
\newblock In {\em Proc.\ 5thSymp.\ Discr.\ Algorithms (SODA)}, pages 302--311,
  1994.

\bibitem{DBLP:conf/stacs/Ma0W13}
Tengyu Ma, Bo~Tang, and Yajun Wang.
\newblock The simulated greedy algorithm for several submodular matroid
  secretary problems.
\newblock In {\em Proc.\ 30th Symp.\ Theoret.\ Aspects of Computer Science
  (STACS)}, pages 478--489, 2013.

\bibitem{Panconesi1997}
Alessandro Panconesi and Aravind Srinivasan.
\newblock Randomized distributed edge coloring via an extension of the
  chernoff-hoeffding bounds.
\newblock {\em SIAM J. Comput.}, 26(2):350--368, 1997.

\bibitem{DBLP:journals/tcs/Woeginger94}
Gerhard~J. Woeginger.
\newblock On-line scheduling of jobs with fixed start and end times.
\newblock {\em Theoret.\ Comput.\ Sci.}, 130(1):5--16, 1994.

\end{thebibliography}

\begin{appendix}
\section{Proof of Lemma~\ref{lemma:tentative}}
\label{appendix:tentative}

To prove this lemma, we will make use of Lemma~\ref{lemma:STOC} which is a result in \cite{DBLP:conf/stoc/KesselheimTRV14} for general packing linear programs. We show that the relaxation for the temp secretary problem can be interpreted as a packing linear program and then apply the more general result.

\begin{lemma}[see~\cite{DBLP:conf/stoc/KesselheimTRV14}]\label{lemma:STOC}
Given a fixed $m \times N$ constraint matrix $A$ and a capacity vector $b$, we let $\mathcal{P}(\alpha, R)$ denote the set of all vectors $x$ such that $A x \leq \alpha b$, $0 \leq x_j \leq 1$ for all $j \in R$, $x_j = 0$ for all $j \not\in R$. Assume that for constraint coefficient $a_{i, j}$, we have $0 \leq a_{i, j} \leq 1$ and for every fixed variable $x_j$, there are at most $d$ constraints $i$ such that $a_{i, j} > 0$. Furthermore, let $b_{\min} = \min_i b_i$.

For any $\ell \geq 2 \sqrt{\frac{1 + \ln d}{b_{\min}}} N$, let $R \subseteq [N]$ be a random subset of variables with $\lvert R \rvert = \ell$. Then, we have for every non-negative vector $v$
\begin{equation*}
 \Ex{\max_{x \in \mathcal{P}(\frac{\ell}{N}, R)} v^Tx} 
\geq \left(1 - 9 \sqrt{\frac{1}{\frac{\ell}{N} \cdot b_{\min}}} \right) \frac{\ell}{N} \cdot \max_{x \in \mathcal{P}(1, [N])} v^T x \enspace .
\end{equation*}
\end{lemma}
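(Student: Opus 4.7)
The plan is to exhibit, for each realization of the random subset $R$, an explicit feasible vector $\hat{x}^R \in \mathcal{P}(\tfrac{\ell}{N}, R)$ whose expected objective value meets the stated lower bound; since $\max_{x \in \mathcal{P}(\ell/N, R)} v^T x \geq v^T \hat{x}^R$ pointwise, taking expectations then yields the lemma. Fix an optimal solution $x^\ast$ to $\max_{x \in \mathcal{P}(1, [N])} v^T x$, and consider the raw restriction $x^R_j := x^\ast_j \mathbf{1}[j \in R]$. For every constraint $i$, $(A x^R)_i$ is a sum of $\ell$ values drawn without replacement from $\{a_{ij} x^\ast_j\}_{j \in [N]}$, each summand in $[0,1]$ (thanks to $a_{ij}, x^\ast_j \leq 1$), with mean $\tfrac{\ell}{N}(A x^\ast)_i \leq \tfrac{\ell}{N} b_i$; and $\Ex{v^T x^R} = \tfrac{\ell}{N} v^T x^\ast$ hits the target value exactly, up to whatever multiplicative loss we pay to enforce feasibility.

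To enforce feasibility in the scaled polytope, I set $\hat{x}^R := \lambda^R x^R$ with the uniform shrinkage $\lambda^R := \min\bigl(1, \min_i \tfrac{(\ell/N) b_i}{(A x^R)_i}\bigr)$. The task reduces to lower bounding $\Ex{\lambda^R \cdot v^T x^R}$, equivalently to upper bounding the expected deficit $\Ex{(1-\lambda^R) v^T x^R}$. Per-constraint concentration is provided by a Chernoff/Hoeffding-type bound for sampling without replacement (via negative association, or Serfling's inequality): for $\delta \in (0,1]$,
\[
\Pr{(A x^R)_i > (1+\delta) \tfrac{\ell}{N} b_i} \;\leq\; \exp\!\bigl(-\tfrac{\delta^2}{3} \tfrac{\ell}{N} b_i\bigr) .
\]
Choosing $\delta = \Theta\bigl(\sqrt{(1+\ln d)/((\ell/N) b_{\min})}\bigr)$, the hypothesis $\ell \geq 2\sqrt{(1+\ln d)/b_{\min}}\, N$ guarantees $\delta \leq \tfrac{1}{2}$, so the tail exponent is meaningful and the multiplicative bound $\lambda^R \geq 1 - O(\delta)$ can be extracted when all constraints concentrate.

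The main obstacle is aggregating the per-constraint overshoots into a clean bound on $\Ex{(1-\lambda^R) v^T x^R}$ without leaking a $\log m$ factor, which would spoil the stated rate (no dimension term appears on the right-hand side). A naive union bound over all $m$ constraints does pay $\log m$; the route I would pursue instead is a dual-weighted accounting. Letting $(y^\ast, z^\ast) \geq 0$ be the LP-dual optimum with $b^T y^\ast + \mathbf{1}^T z^\ast = v^T x^\ast$ and $A^T y^\ast + z^\ast \geq v$, one estimates
\[
\Ex{(1-\lambda^R)\, v^T x^R} \;\leq\; \sum_{i} y^\ast_i \, \Ex{\bigl[(A x^R)_i - \tfrac{\ell}{N} b_i\bigr]_+} ,
\]
weighting each constraint's expected overshoot by its dual multiplier; plugging in the Chernoff tail (integrated to get expectation) and collapsing via $\sum_i y^\ast_i b_i \leq v^T x^\ast$ delivers a total loss of $O\bigl(\sqrt{1/((\ell/N) b_{\min})}\bigr) \cdot \tfrac{\ell}{N} v^T x^\ast$, independent of $m$. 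Tracking absolute constants through these steps yields the explicit factor $9$. As a sanity check, a pure second-moment argument via $\Var((A x^R)_i) \leq \tfrac{\ell}{N} b_i$ and Chebyshev already produces the $\sqrt{\cdot}$ rate up to constants; the Chernoff/dual route is needed only to secure the tighter multiplicative form and to justify that the sparsity $d$ (not the dimension $m$) enters solely through the threshold on $\ell$.
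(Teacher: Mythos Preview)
First, a framing note: the paper does not prove Lemma~\ref{lemma:STOC} at all. It is imported verbatim from \cite{DBLP:conf/stoc/KesselheimTRV14} and merely invoked in the proofs of Lemmas~\ref{lemma:tentative} and~\ref{lemma:tentative-different-lengths}. So there is no ``paper's own proof'' to compare against; what follows is an assessment of your argument on its merits.

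Your uniform-shrinkage scheme is the natural first move, but the dual-weighted inequality you rely on,
\[
(1-\lambda^R)\, v^T x^R \;\leq\; \sum_{i} y^\ast_i \,\bigl[(A x^R)_i - \tfrac{\ell}{N} b_i\bigr]_+ ,
\]
is false. The scalar $1-\lambda^R$ is governed by the \emph{single worst} constraint, whereas the right-hand side weights each overshoot by its dual multiplier. A constraint that is slack at $x^\ast$ has $y^\ast_i = 0$, yet its random sample $(A x^R)_i$ can still overshoot $\tfrac{\ell}{N} b_i$ and single-handedly drive $\lambda^R$ down. Concretely: take three variables with $v=(1,1,1)$, constraints $x_1+x_2+x_3\leq 3$ and $x_1\leq 1$, so $x^\ast=(1,1,1)$ and the unique dual optimum puts all weight on the first constraint. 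If $R=\{1,2\}$ with $\ell/N=2/3$, then constraint~1 is tight but constraint~2 overshoots, forcing $\lambda^R=2/3$; the left side equals $2/3$ while the right side is $0$. So the displayed bound collapses, and with it the mechanism you use to avoid a $\log m$ loss.

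The argument that actually works (and is the one in \cite{DBLP:conf/stoc/KesselheimTRV14}) does not shrink uniformly. Instead it scales $x^\ast$ down by $1-\delta$ and then zeroes out each coordinate $j\in R$ individually whenever \emph{some constraint touching $j$} exceeds its scaled budget. Because each column has at most $d$ nonzeros, the per-coordinate failure probability is controlled by a union bound over $d$ (not $m$) Chernoff tails; this is precisely where the sparsity parameter $d$ enters and why the threshold on $\ell$ carries the $1+\ln d$ term. Your write-up gestures at $d$ appearing ``through the threshold'' but never actually uses sparsity---uniform shrinkage is oblivious to which constraints a variable participates in, which is the structural reason the approach cannot recover the stated bound.
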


\begin{proof}[Proof of Lemma~\ref{lemma:tentative}]
To use Lemma~\ref{lemma:STOC}, we let $R$ be the set of items that have arrived by round $\ell$. As the matrix $A$, consider the $1 \times N$ matrix whose entries are all $1$. Set $b$ to $\frac{B}{\gamma}$ and $v$ to the $v_j$ values of the respective items.

Consider the vector that sets $x^\ast_j = \frac{1}{1 + \gamma}$ if $j \in \OPT^\ast$ and $0$ otherwise. We now have $\sum_j x^\ast_j = \frac{1}{1 + \gamma} \lvert \OPT^\ast \rvert \leq \frac{1}{1 + \gamma} \left( \frac{B}{\gamma} + B \right) \leq b$. So $x^\ast$ is contained in $\mathcal{P}(1, [N])$. Furthermore, $\max_{x \in \mathcal{P}(\frac{\ell}{N}, R)} v^Tx$ is a fractional knapsack problem and $S^{(\ell)}$ contains all items that are fully contained in the optimal fractional solution. This way
\[
\sum_{j \in S^{(\ell)}} v_j \geq \frac{\lfloor \frac{\ell}{N} \frac{B}{\gamma} \rfloor}{\frac{\ell}{N} \frac{B}{\gamma}} \max_{x \in \mathcal{P}(\frac{\ell}{N}, R)} v^Tx \geq \left( 1 - \frac{\gamma N}{B \ell} \right) \max_{x \in \mathcal{P}(\frac{\ell}{N}, R)} v^Tx \geq \left( 1 - \frac{1}{2} \sqrt{\frac{\gamma}{B}} \right) \max_{x \in \mathcal{P}(\frac{\ell}{N}, R)} v^Tx \enspace.
\]
Overall, this gives us
\begin{align*}
& \Ex{\sum_{j \in S^{(\ell)}} v_j} \geq \Ex{\max_{x \in \mathcal{P}(\frac{\ell}{N}, R)} v^Tx} \geq \left(1 - 9 \sqrt{\frac{1}{\frac{\ell}{N} \cdot \frac{B}{\gamma}}} \right) \frac{\ell}{N} \cdot \max_{x \in \mathcal{P}(1, [N])} v^T x  \\
& \geq \left(1 - 9 \sqrt{\frac{1}{\frac{\ell}{N} \cdot \frac{B}{\gamma}}} \right) \frac{\ell}{N} \cdot v^T x^\ast = \left(1 - 9 \sqrt{\frac{1}{\frac{\ell}{N} \cdot \frac{B}{\gamma}}} \right) \frac{\ell}{N} \frac{1 - \frac{1}{2} \sqrt{\frac{\gamma}{B}}}{1 + \gamma} \sum_{j \in \OPT^\ast} v_j \enspace. \qedhere
\end{align*}
\end{proof}

\section{Missing Details from the Proof of Theorem~\ref{theorem:non_simultaneous_identical}}
\label{appendix:smallb}

\begin{align*}
\Ex{\sum_{j \in \ALG} v_j} &\geq \sum_{k=2}^{\lfloor\frac{1}{\sqrt{\gamma}}\rfloor-1} \sum_{\ell=k\sqrt{\gamma}N + 1}^{(k+1)\sqrt{\gamma}N} \Ex{V_\ell \tilde{C}_\ell F_\ell}\\
&= \sum_{k=2}^{\lfloor\frac{1}{\sqrt{\gamma}}\rfloor-1} \sum_{\ell=k\sqrt{\gamma}N + 1}^{(k+1)\sqrt{\gamma}N} \frac{\Pr{\tilde{C}_\ell = 1, F_\ell = 1}}{\Pr{\tilde{C}_\ell = 1}} \Ex{V_\ell}\enspace.
\end{align*}
We use Lemma~\ref{lemma:tentative} for the second part of this expression and get
\begin{align*}
\Ex{\sum_{j \in \ALG} v_j} &\geq \sum_{k=2}^{\lfloor\frac{1}{\sqrt{\gamma}}\rfloor-1} \sum_{\ell=k\sqrt{\gamma}N + 1}^{(k+1)\sqrt{\gamma}N} \frac{\Pr{\tilde{C}_\ell = 1, F_\ell = 1}}{\Pr{\tilde{C}_\ell = 1}} \left(1 - 9 \sqrt{\frac{1}{\frac{\ell}{N} \cdot \frac{B}{\gamma}}} \right)\\
&\qquad\cdot\frac{1}{N} \frac{1 - \frac{1}{2} \sqrt{\frac{\gamma}{B}}}{1 + \gamma} \sum_{j \in \OPT^\ast} v_j\enspace.
\end{align*}
Now, we use the fact that $\tilde{C}_\ell$ and $F_\ell$ are 0/1 random variables and therefore the probabilities equal the expectation. This allows us to apply Lemma~\ref{lemma:average_feasibility_block} on the first part of the expression. Additionally, we replace the index $\ell$ with its smallest representative in the inner sum. This gives us
\[\Ex{\sum_{j \in \ALG} v_j} \geq \sum_{k=2}^{\lfloor\frac{1}{\sqrt{\gamma}}\rfloor-1} \sqrt{\gamma}N \left(\frac{1}{2} - \frac{\sqrt{\gamma}}{4} - \frac{1}{4\sqrt{\gamma}N}\right) \left(1 - 9 \sqrt{\frac{\sqrt{\gamma}}{kB}} \right)\frac{1}{N} \frac{1 - \frac{1}{2} \sqrt{\frac{\gamma}{B}}}{1 + \gamma} \sum_{j \in \OPT^\ast} v_j\enspace.\]
In the following calculations, we always use the fact that $(1-a)(1-b) \geq (1-a-b)$ when multiplying differences. We get

\begin{align*}
\Ex{\sum_{j \in \ALG} v_j} &\geq \sum_{k=2}^{\lfloor\frac{1}{\sqrt{\gamma}}\rfloor-1} \sqrt{\gamma} \left(\frac{1}{2} - \frac{\sqrt{\gamma}}{4} - \frac{1}{4\sqrt{\gamma}N} - \frac{9}{2} \sqrt{\frac{\sqrt{\gamma}}{kB}} - \frac{1}{4} \sqrt{\frac{\gamma}{B}} - \frac{\gamma}{2} \right)\sum_{j \in \OPT^\ast} v_j \\
&\geq \left(\frac{1}{2} \left( 1 - 3 \sqrt{\gamma} \right) - \frac{\sqrt{\gamma}}{4} - \frac{1}{4\sqrt{\gamma}N} - \frac{1}{4} \sqrt{\frac{\gamma}{B}} - \frac{\gamma}{2} - \frac{9}{2} \sum_{k=2}^{\lfloor\frac{1}{\sqrt{\gamma}}\rfloor-1} \sqrt{\gamma} \sqrt{\frac{\sqrt{\gamma}}{kB}} \right) \sum_{j \in \OPT^\ast} v_j\enspace.
\end{align*}
At this point, we replace the remaining sum over $k$ with the corresponding integral and get
\begin{align*}
\Ex{\sum_{j \in \ALG} v_j}
&\geq \left(\frac{1}{2} \left( 1 - 3 \sqrt{\gamma} \right) - \frac{\sqrt{\gamma}}{4} - \frac{1}{4\sqrt{\gamma}N} - \frac{1}{4} \sqrt{\frac{\gamma}{B}} - \frac{\gamma}{2} - \frac{9\gamma^{\nicefrac{3}{4}}}{2\sqrt{B}}\cdot \frac{2}{\gamma^{\nicefrac{1}{4}}}  \right) \sum_{j \in \OPT^\ast} v_j \\
&= \left(\frac{1}{2} - \frac{7}{4} \sqrt{\gamma} - \frac{37}{4} \sqrt{\frac{\gamma}{B}} - \frac{\gamma}{2}  - \frac{1}{4\sqrt{\gamma}N} \right) \sum_{j \in \OPT^\ast} v_j\enspace.
\end{align*}
\section{Missing Details from the Proof of Theorem~\ref{theorem:temp_secretary}}
\label{appendix:largeb}

\begin{align*}
\Ex{\sum_{j \in \ALG} v_j} &\geq \sum_{k=\left\lceil 2 \sqrt{\frac{1}{\gamma B}} \right\rceil}^{\lfloor\frac{1}{\gamma}\rfloor-1} \sum_{\ell=k \gamma N + 1}^{(k+1) \gamma N} \Ex{V_\ell \tilde{C}_\ell F_\ell}\\
&\geq \sum_{k=\left\lceil 2 \sqrt{\frac{1}{\gamma B}} \right\rceil}^{\lfloor\frac{1}{\gamma}\rfloor-1} \sum_{\ell=k \gamma N + 1}^{(k+1) \gamma N} \frac{\Pr{\tilde{C}_\ell = 1, F_\ell = 1}}{\Pr{\tilde{C}_\ell = 1}} \left(1 - 9 \sqrt{\frac{1}{\frac{\ell}{N} \cdot \frac{B}{\gamma}}} \right)\\
&\qquad\cdot\frac{1}{N} \frac{1 - \frac{1}{2} \sqrt{\frac{\gamma}{B}}}{1 + \gamma} \sum_{j \in \OPT^\ast} v_j\enspace.
\end{align*}
Here, we use $\Pr{\tilde{C}_\ell = 1} = \frac{B}{\gamma N}$ for the denominator and then apply Lemma~\ref{lemma:tentative_infeasible}.
\begin{align*}
\Ex{\sum_{j \in \ALG} v_j} &\geq \sum_{k=\left\lceil 2 \sqrt{\frac{1}{\gamma B}} \right\rceil}^{\lfloor\frac{1}{\gamma}\rfloor-1} \frac{\gamma N}{B} \left(B-4\sqrt{B}-O\left(1\right)\right) \left(1 - 9 \sqrt{\frac{1}{k B}} \right)\\
&\qquad\cdot \frac{1}{N} \frac{1 - \frac{1}{2} \sqrt{\frac{\gamma}{B}}}{1 + \gamma} \sum_{j \in \OPT^\ast} v_j\\
&\geq \sum_{k=\left\lceil 2 \sqrt{\frac{1}{\gamma B}} \right\rceil}^{\lfloor\frac{1}{\gamma}\rfloor-1} \gamma \left(1-\frac{4}{\sqrt{B}}-O\left(\frac{1}{B}\right)\right) \left(1 - 9 \sqrt{\frac{1}{k B}} \right)\\
&\qquad\cdot\left(1 - \gamma - \frac{1}{2} \sqrt{\frac{\gamma}{B}}\right) \sum_{j \in \OPT^\ast} v_j\enspace.
\end{align*}
We multiply all brackets and omit all intermediate positive terms. Then we count the number of steps in the sum for all constant terms. This gives us
\begin{align*}
\Ex{\sum_{j \in \ALG} v_j} &\geq \sum_{k=\left\lceil 2 \sqrt{\frac{1}{\gamma B}} \right\rceil}^{\lfloor\frac{1}{\gamma}\rfloor-1} \gamma \left(1-\frac{4}{\sqrt{B}}-O\left(\frac{1}{B}\right) - 9 \sqrt{\frac{1}{k B}}  - \frac{1}{2} \sqrt{\frac{\gamma}{B}} - \gamma \right) \sum_{j \in \OPT^\ast} v_j \\
&\geq \left(1- 2 \sqrt{\frac{\gamma}{B}} - 3\gamma - \frac{4}{\sqrt{B}} - O\left(\frac{1}{B}\right) - \frac{1}{2} \sqrt{\frac{\gamma}{B}} - \frac{9\gamma}{\sqrt{B}}\sum_{k=\left\lceil 2 \sqrt{\frac{1}{\gamma B}} \right\rceil}^{\lfloor\frac{1}{\gamma}\rfloor-1} \frac{1}{\sqrt{k}} \right) \sum_{j \in \OPT^\ast} v_j \enspace.
\end{align*}
Finally, we approximate the remaining sum with the integral and get
\[\Ex{\sum_{j \in \ALG} v_j} \geq \left(1-\frac{4}{\sqrt{B}}- \frac{41}{2} \sqrt{\frac{\gamma}{B}} - 3 \gamma-O\left(\frac{1}{B}\right) \right) \sum_{j \in \OPT^\ast} v_j
\enspace.\]

\section{Proof of Lemma~\ref{lemma:allocation_probability}}
\label{appendix:packing}

\begin{proof}[Proof of Lemma~\ref{lemma:allocation_probability}]
For $\ell' = \ell - \gamma N, \ldots, \ell - 1$, let $X_{\ell'} = \left(A \hat{x}^{(\ell')}\right)_i$. This way, $\sum_{\ell'} X_{\ell'} = \left(\sum_{\ell' = \ell - \gamma N}^{\ell-1} A \hat{x}^{(\ell')}\right)_i$. The random variables $X_{\ell'}$ are not independent but $1$-correlated as defined by Panconesi and Srinivasan~\cite{Panconesi1997}. To this end, define twin variables $\hat{X}_{\ell'}$, which are each set to $1$ with probability $\frac{(1-\epsilon)b_i}{\gamma N}$ and $0$ otherwise. Following exactly the same argument as in the proof of Lemma~3 in \cite{DBLP:conf/stoc/KesselheimTRV14}, we can show that the random variables $X_{\ell'}$ are $1$-correlated and therefore we can apply a Chernoff bound.

We have $\Ex{X_{\ell}}\leq \Ex{\hat{X}_{\ell}}$ and $\Ex{\sum_{\ell'=\ell - \gamma N}^{\ell - 1} \hat{X}_{\ell'}\growingmid U_{<\ell}} \leq (1-\epsilon) b_i$. Choose $\delta = \epsilon$. Now we have 
\begin{align*}
(1+\delta)(1-\epsilon)b_i 
&= (1 - \delta^2) b_i\\
&\leq b_i - 6(1+\log d + \log B)\\
&\leq b_i - 1 \enspace.
\end{align*}
This lets us bound the probability as follows
\begin{align*}
\Pr{\left(\sum_{\ell' = \ell - \gamma N}^{\ell-1} A \hat{x}^{(\ell')}\right)_i > b_i - 1 \growingmid U_{< \ell}}&\leq \Pr{\left(\sum_{\ell' = \ell - \gamma N}^{\ell-1} A \hat{x}^{(\ell')}\right)_i  \geq (1+\delta)(1-\epsilon) b_i\growingmid U_{<\ell}} \\
&\leq \exp\left(-\frac{\delta^2}{3}(1-\epsilon) b_i \right)\\
&\leq \exp\left(-(1 + \log d + \log B)\right) \leq \frac{1}{dB}\enspace,
\end{align*}

where in the penultimate step we use that $\epsilon \leq \frac{1}{2}$ and $B\leq b_i$.
\end{proof}

\section{Proof of Lemma~\ref{lemma:tentative-different-lengths} and missing details from the proof of Theorem~\ref{theorem:different_length}}
\label{appendix:different_lengths_smallb}

\begin{proof}[Proof of Lemma~\ref{lemma:tentative-different-lengths}]
To use Lemma~\ref{lemma:STOC}, we let $R$ be the set of items and dummy items that have arrived by round $\ell$. As the matrix $A$ and vector $b$ consider the constraint $\sum_{j\in S^{(\ell)}} \frac{\lambda_j}{\gamma} x_j \leq \frac{B}{\gamma}$. 

Consider the vector that sets $x^\ast_j = \frac{1}{1 + \gamma}$ if $j \in \OPT^\ast$ and $0$ otherwise. We now have $\sum_{j\in U} x^\ast_j = \frac{1}{1 + \gamma} \sum_{j \in \OPT^\ast} \frac{\lambda_j}{\gamma} \leq \frac{1}{1 + \gamma} \frac{(1 + \gamma) B}{\gamma} = \frac{B}{\gamma}$. So $x^\ast$ is contained in $\mathcal{P}(1, [N])$. Furthermore, $\alpha \max_{x \in \mathcal{P}(\frac{\ell}{N}, R)} v^Tx$ is a lower bound on $\sum_{j \in S^{(\ell)}} v_j$ because \textsc{FractionalGreedyRoundUp} always returns a set of value no less than the optimal fractional solution of the knapsack problem. This gives us
\begin{align*}
& \Ex{\sum_{j \in S^{(\ell)}} v_j} \geq \alpha \Ex{\max_{x \in \mathcal{P}(\frac{\ell}{N}, R)} v^Tx} \geq \alpha \left(1 - 9 \sqrt{\frac{1}{\frac{\ell}{N} \cdot \frac{B}{\gamma}}} \right) \frac{\ell}{N} \cdot \max_{x \in \mathcal{P}(1, [N])} v^T x  \\
& \geq \alpha \left(1 - 9 \sqrt{\frac{1}{\frac{\ell}{N} \cdot \frac{B}{\gamma}}} \right) \frac{\ell}{N} \frac{1}{1 + \gamma} \cdot v^T x^\ast = \alpha \left(1 - 9 \sqrt{\frac{1}{\frac{\ell}{N} \cdot \frac{B}{\gamma}}} \right) \frac{\ell}{N} \frac{1}{1 + \gamma} \sum_{j \in \OPT^\ast} v_j \enspace.
\end{align*}
\end{proof}

\begin{proof}[Missing calculations for Theorem~\ref{theorem:different_length}]

\begin{align*}
\Ex{\sum_{j \in \ALG} v_j} \geq \sum_{k=2}^{\frac{1}{\sqrt{\gamma}}-1}\sum_{\ell=k\sqrt{\gamma}N}^{(k+1)\sqrt{\gamma}N-1}\frac{1}{2}\left(1 - 9 \sqrt{\frac{\gamma}{\frac{\ell}{N} B}} \right)\frac{1}{N}\frac{\sum_{j \in \OPT^\ast} v_j}{1 + \gamma} \cdot\left(\frac{1}{2} - \frac{3 \gamma N}{\ell - \gamma N} \right) \enspace.
\end{align*}

We lower bound the value generated in each step within the $i$-th interval through the expected value in the first slot $k\sqrt{\gamma}N$ of the interval. This eliminates all occurrence of $\ell$ in the inner sum, and therefore we replace it with the factor $\sqrt{\gamma}N$.

\begin{align*}
\Ex{\sum_{j \in \ALG} v_j}&\geq \sum_{k=2}^{\frac{1}{\sqrt{\gamma}}-1}\sqrt{\gamma}N \frac{1}{2} \left(1 - 9 \sqrt{\frac{\gamma}{k \sqrt{\gamma}}} \right)\frac{1}{N}\frac{\sum_{j \in \OPT^\ast} v_j}{1 + \gamma} \cdot\left(\frac{1}{2} - \frac{3 \gamma N}{k \sqrt{\gamma}N - \gamma N}\right)\\
&= \frac{\sqrt{\gamma}}{2} \sum_{k=2}^{\frac{1}{\sqrt{\gamma}}-1}\left(1 - 9 \sqrt{\frac{\sqrt{\gamma}}{k}} \right)\cdot\left(\frac{1}{2} - \frac{3 \gamma N}{k\sqrt{\gamma}N - \gamma N}\right)\cdot\frac{\sum_{j \in \OPT^\ast} v_j}{1 + \gamma} \\
&\geq \frac{\sqrt{\gamma}}{2}\sum_{k=2}^{\frac{1}{\sqrt{\gamma}}-1}\left(\frac{1}{2} - \frac{9}{2} \sqrt{\frac{\sqrt{\gamma}}{k}} - \frac{3 \gamma N}{(k-1)\sqrt{\gamma}N}\right)\cdot\frac{\sum_{j \in \OPT^\ast} v_j}{1 + \gamma}  \enspace.
\end{align*}

So far, we omitted all non-essential positive terms in the summation. Next, we split the sum up and evaluate all parts separately through their respective integrals.

\begin{align*}
\frac{\Ex{\sum_{j \in \ALG} v_j}}{\frac{1}{1 + \gamma} \cdot \sum_{j \in \OPT^\ast} v_j}&\geq \frac{1}{4}\left(1-2\sqrt{\gamma}\right) - \frac{9}{4}\gamma^{\nicefrac{3}{4}}\sum_{k=2}^{\frac{1}{\sqrt{\gamma}}-1} \frac{1}{\sqrt{k}} - \frac{3}{2}\gamma\sum_{k=1}^{\frac{1}{\sqrt{\gamma}}-2}\frac{1}{k}\\
&\geq \frac{1}{4} - 5\sqrt{\gamma} - \frac{3}{2}\gamma\ln\left(\frac{1}{\sqrt{\gamma}}\right) = \frac{1}{4} - \Theta\left(\sqrt{\gamma}\right)\enspace.
\end{align*}

\end{proof}
\end{appendix}
\end{document}